\numberwithin{equation}{section}
\newcommand{\e}{\epsilon}
\newcommand{\kla}{\left ( }
\newcommand{\mer}{\right ) }
\renewcommand{\for}{\begin{eqnarray*}}
\newcommand{\mel}{\end{eqnarray*}}
\def\fr{\begin{align*}}
\newcommand{\kl}{\pl \le \pl}
\newcommand{\gl}{\pl \ge \pl}
\newcommand{\lel}{\pl = \pl}
\newcommand{\ten}{\otimes}
\newcommand{\p}{\hspace{.05cm}}
\newcommand{\pl}{\hspace{.1cm}}
\newcommand{\pll}{\hspace{.3cm}}
\newcommand{\ran}{\rangle}
\newcommand{\lan}{\langle}
\newcommand{\al}{\alpha}
\newcommand{\si}{\sigma}
\newcommand{\la}{\lambda}
\newcommand{\F}{{\mathcal F}}
\newcommand{\E}{{\mathcal E}}
\newcommand{\A}{{\mathcal A}}
\newcommand{\B}{{\mathbb{B}}}
\newcommand{\D}{{\mathcal D}}
\newcommand{\M}{{\mathcal M}}
\newcommand{\R}{{\mathcal R}}
\newcommand{\C}{{\mathcal C}}
\newcommand{\CC}{{\mathbb C}}
\newcommand{\N}{{\mathcal N}}
\newcommand{\T}{{\mathbb T}}
\newcommand{\fs}[2]{\frac{#1}{#2}}
\newcommand{\norm}[2]{\parallel \! #1 \! \parallel_{#2}}
\newtheorem{lemma}{Lemma}[section]
\newtheorem{prop}[lemma]{Proposition}
\newtheorem{theorem}[lemma]{Theorem}
\newtheorem{cor}[lemma]{Corollary}
\newtheorem{rem}[lemma]{Remark}
\newcommand{\re}{\begin{rem}\rm}
\newcommand{\mar}{\end{rem}}
\newtheorem{exam}[lemma]{Example}
\newcommand{\bra}[1]{\langle{#1}|}
\newcommand{\ket}[1]{|{#1}\rangle}
\newcommand{\ketbra}[1]{|{#1}\rangle\langle{#1}|}
\newcommand{\qd}{\end{proof}\vspace{0.5ex}}
\newcommand{\prf}{\begin{proof}[\bf Proof:]}
\newcommand{\xspace}{\hbox{\kern-2.5pt}}
\newcommand{\rra}{\rightarrow }
\newtheorem{defi}[lemma]{Definition}
\begin{document}
\title{Capacity Estimates via comparison with TRO channels}
\setlength{\parindent}{4ex}
\author{L. Gao}
\address{Department of Mathematics\\
University of Illinois, Urbana, IL 61801, USA} \email[Li Gao]{ligao3@illinois.edu}

\author[M. Junge]{M. Junge$^*$}\thanks{$^*$ Partially supported by $\text{NSF-DMS}$ 1501103}
\address{Department of Mathematics\\
University of Illinois, Urbana, IL 61801, USA} \email[Marius
Junge]{mjunge@illinois.edu}

\author[N. LaRacuente]{N. LaRacuente$^{\dag}$}\thanks{$^{\dag}$ This material is based upon work supported by NSF Graduate Research Fellowship Program DGE-1144245}
\address{Department of Physics\\
University of Illinois, Urbana, IL 61801, USA} \email[Nicholas LaRacuente]{laracue2@illinois.edu}
\maketitle

\renewcommand{\abstractname}{\bf{ABSTRACT}}
\begin{abstract}
A ternary ring of operators (TRO) in finite dimensions is a diagonal sum of spaces of rectangular matrices. TRO as operator space corresponds to
quantum channels that are diagonal sums of partial traces, which we call TRO channels. TRO channels admits simple, single-letter capacity formula. Using operator space and complex interpolation techniques, we give perturbative capacities estimates for a wider class of quantum channels by comparison to TRO channels. Our estimates applies mainly for quantum and private capacity and also strong converse rates. The examples includes random unitary from group representations which in general are non-degradable channels.
\end{abstract}

\maketitle
\section{Introduction}
Channel capacity, introduced by Shannon in his foundational paper \cite{48}, is the ultimate rate at which information can be reliably transmitted over a communication channel. During the last decades, Shannon's theory on noisy channels has been adapted to the framework of quantum physics. A quantum channel has various capacities depending on different communication tasks, such as \emph{quantum capacity} for transmitting qubits, and \emph{private capacity} for transmitting classical bits with physically ensured security. The coding theorems, which characterize these capacities by entropic expressions, were major successes in quantum information theory (see e.g. \cite{Wildebook}). For instance, the quantum capacity $Q(\N)$ of a channel $\N$, by Lloyd-Shor-Devetak Theorem \cite{Lloyd,Shor,Devetak}, is given by
\begin{align}\label{qcapacity}
 Q(\N)\lel\lim_{k\to \infty} \frac{Q^{(1)}(\N^{\ten k})}{k}\pl,\pl
Q^{(1)}(\N)\lel\max_{\rho }H(\N(\rho))-H(id\ten \N(\rho^{AA'}))\pl,
\end{align}
where $H(\rho)=-tr(\rho\log \rho)$ is the entropy function, and the maximum runs over all pure bipartite states $\rho^{AA'}$. Nevertheless, the capacities for many channels are computationally intractable due to \emph{regularization}, the limit in which one takes the entropic expression \eqref{qcapacity} over asymptotically many uses of the channel. Regularization is in general unavoidable, because the capacity of a combination of two quantum channels may exceeds the sum of their individual capacities \cite{SY,sa1,Cubitt}. This phenomenon, called ``super-additivity'', also exists for classical and private capacities \cite{Hastings,li2009,Elkouss}.

Devetak and Shor in \cite{DS} consider \emph{degradable channels}, for which the receiver can fully reproduce the information lost to the environment by ``degrading" the received output through another channel. Degradable channels are additive, admitting the trivial regularization $Q=Q^{(1)}$ and a simple ``single-letter'' formula for quantum capacity. Several different methods have been introduced to give upper bounds on particular or general channels (e.g. \cite{Holevo2, Winterss, sutter, sc, XW}). Little is known about the exact value of quantum capacity beyond degradable cases. In addition, it is desirable to know whether the strong converse theorem holds for quantum channels. The strong converse would mean that above the quantum capacity, there is a sharp trade off between the transmission rate and transmission accuracy. In this paper, we give capacities estimate for quantum channels via a new analysis of the Stinespring dilation. We briefly explain our main idea below.

Recall that a quantum channel $\N$ is a completely positive trace preserving (CPTP) map that sends densities (positive trace $1$ operators) from one Hilbert space $H_A$ to another $H_B$. $\N$ admits a \emph{Stinespring dilation} as follows
\begin{align}\N(\rho)=tr_E(V\rho V^*)\pl, \label{st}\end{align}
where $V:H_A \to H_B\ten H_E$ is a partial isometry and $H_E$ is the Hilbert space of the environment. We call the range $ran(V)\subset H_B\ten H_E$ the \emph{Stinespring space} of $\N$. Vice versa every subspace $X\subset H_B\ten H_E$ determines a quantum channel by viewing the inclusion as an isometry. Hence the capacities of a channel are determined by its Stinespring space, more precisely the operator space structure by regarding $ran(V)\subset H_B\ten H_E\cong \B(H_E,H_B)$ as operators from $H_E$ to $H_B$. This perspective was previously used in \cite{Guillaume} to understand Hastings' counterexamples for additivity of minimal output entropy.

A \emph{ternary ring of operators} (TRO) is a closed operator subspace $X$ closed under the triple product
\[x,y,z\in X \Rightarrow xy^*z\in X \pl .\]
TRO's were first introduced by Hestenes \cite{Hestenes}, and pursued by many others (see e.g. \cite{zettl,kaur}). In finite dimensions, TRO's are always diagonal sums of rectangular matrices ${\oplus_i (M_{n_i}\ten 1_{m_i})}$ (with multiplicities $m_i$), and the quantum channels whose Stingspring spaces are TRO's are diagonal sums of partial traces (Proposition \ref{pt1}). These simple channels have well-understood capacities \cite{wolf} and the strong converse property. Let $\N$ be a channel as \eqref{st} with its Stinespring space $ran(V)$ being a TRO in $\B(H_E,H_B)$. We consider the channel
\begin{align}\N_f(\rho)=tr_E\Big((1\ten f) V\rho V^*\Big)\pl, \label{mchannel}\end{align}
for which the Stinepsring dilation is modified by multiplying a operator $f$ on the environment $H_E$. With certain assumptions on $f$, $\N_f$ is also a quantum channel and we prove that the capacity of $\N_f$ is comparable to the original $\N$ in the following way,
\begin{align}\label{mainre} Q(\N)\le Q(\N_f)\le Q(\N)+\tau(f\log f)\pl
\end{align}
where $Q$ is the quantum capacity in \eqref{qcapacity} and $\tau(f\log f)=\frac{1}{|E|}tr_E(f\log f)$ is a normalized entropy of $f$. One class of our examples are random unitary channels arose from group representation. Let $G$ be a finite group and $u:G \to \B(H)$ be a (projective) unitary representation. For probability distributions $f$ on $G$, we define the random unitary
\[\pll \N_f(\rho)=\sum_g f(g)u(g)\rho u(g)^* \pl,\pl \N(\rho)=\frac{1}{|G|}\sum_g u(g)\rho u(g)^*\pl.\]
Here $\N$ is a special case of $\N_f$ with $f$ being the uniform distribution $(\frac{1}{|G|},\cdots,\frac{1}{|G|})$ on $G$, and its capacity $Q(\N)$ is given by the logarithm of the largest multiplicity in the irreducible decomposition of $u$. The inequality \eqref{mainre} implies that
\begin{align} Q(\N)\le Q(\N_f)\le Q(\N)+\log |G|-H(f)\pl,
\end{align}
where $|G|$ is the order of $G$ and $H(f)=-\sum f(g)\log f(g)$ is the Shannon entropy. When $G$ is a noncommutative group, $\N_f$ is in general not degradable especially when $f$ is close to the uniform distribution.

The key inequality in our argument is the following ``local comparison property'': for any positive operators $\si$ and $\rho$,
\begin{align}\label{pcom}\norm{\N(\si)^{-\frac{1}{2p'}}\N (\rho)\N(\si)^{-\frac{1}{2p'}}}{p}\pl\le \pl\norm{\N(\si)^{-\frac{1}{2p'}}&\N_f (\rho)\N(\si)^{-\frac{1}{2p'}}}{p}\pl \nonumber\\ &\le  \pl \norm{f}{p,\tau}\norm{\N(\si)^{-\frac{1}{2p'}}\N (\rho)\N(\si)^{-\frac{1}{2p'}}}{p}\pl,   \end{align}
where $\norm{a}{p}=tr(|a|^{p})^{\frac{1}{p}}$ is the Schatten $p$-norm, $\norm{f}{p,\tau}=\frac{1}{|E|}tr_E(|f|^{p})^{\frac{1}{p}}$ is the $p$-norm of normalized trace and $\frac{1}{p}+\frac{1}{p'}=1$. The ``local comparison property'' is actually an inequality of sandwiched R{\'e}nyi relative entropy introduced in \cite{MLDS,cconverse}. The sandwiched R{\'e}nyi relative entropies are used to prove the strong converse for entanglement-assisted communication \cite{ceaconverse}, and to give upper bounds on the strong converse of classical communication \cite{cconverse}, quantum communication \cite{qconverse}, and more recently private classical communication \cite{PLOB15,pconverse}. Based on these results, we find that our comparison method \eqref{mainre} also applies to strong converse rate for both quantum and private communication.

We organize this work as follows. Section 2 recalls the concept of TRO's from operator algebras and proves the ``local comparison theorem''. Section 3 is devoted to applications on estimating capacities, capacity regions and strong converse rates. Section 4 discusses examples from group representations. We provide an appendix describing the complex interpolation technique used in our argument. 
\section{TRO Channels and local comparison property}
\subsection{Channels and Stinespring spaces} We denote by $\B(H)$ the bounded operators on a Hilbert space $H$. We restrict ourselves to finite dimensional Hilbert spaces and write $|H|$ for the dimension of $H$. The standard $n$-dimensional Hilbert space is denoted by $\CC^n$ and $n\times n$ matrix space is $M_n$. A \emph{state} on $H$ is given by a density operator $\rho$ in $ \B(H)$, i.e.  $\rho\ge 0,\ {tr(\rho)=1}$, where ``tr'' is the matrix trace. The physical systems and their Hilbert spaces are indexed by capital letters as $A,B,\cdots$. We use superscripts to track multipartite state and their reduced densities, i.e. for a bipartite state $\rho^{AB}$ on $H_A\ten H_B$, $\rho^{A}=tr_B (\rho^{AB})$ presents its reduced density matrix on $A$. We use $1_{A}$ (resp. $1_n$) for the identity operator in $\B(H_A)$ (resp. $M_n$), and $id_{A}$ (resp. $id_n$) for the identity map on $\B(H_A)$ (resp. $M_n$).

Let $\N:\B({H_{A}})\to \B(H_B)$ be a quantum channel (CPTP map) with Stinespring dilation $\N(\rho)=tr_E(V\rho V^*)$. The complementary channel of $\N$ is
 \begin{align} \label{dilation}\N^E:\B(H_{A})\to \B(H_E)\pl, \pl  \N^E(\rho)=tr_B(V\rho V^*)\pl .\end{align}
This dilation \eqref{dilation} is not unique, but different ones are related by partial isometries between the environment systems. Given an orthonormal basis $\{\ket{e_i}\}$ of $H_E$ and its dual basis  $\{\bra{e_i}\}$ in $H_E^*$, one can identify the tensor product Hilbert space $H_B\ten H_E$ with the operators $\B(H_E,H_B)$ as follows,
\[\ket{h}=\sum_i \ket{h_i}\ten \ket{e_i} \to h=\sum_i \ket{h_i}\ten \bra{e_i}\pl, \pl\pl \ket{h_i}\in H_B.\]
This identification depends on the choice of the basis $\{\ket{e_i}\}$ but is unique up to a unitary equivalence. It acts as a partial trace on pure bipartite states,
\begin{align}tr_E(\ket{h}\bra{k})=hk^* \pl\pl, \pl\pl tr_B(\ket{h}\bra{k})=k^*h\pl. \label{ptrace}\end{align}
Throughout this paper we will use ``bra-ket'' notation for vectors and dual vectors. The Stinespring space $X=ran(V)$ then becomes an operator subspace of $\B(H_E,H_B)$.
Note that $X$ equipped with Hilbert-Schmidt norm is isomorphic to the input system $H_{A}$ via $V$. We can identify $\ket{h}$ with $V\ket{h}$ and denote the operator analog to $V\ket{h}\in H_B\ten H_E$ by $h$ as follows,
\[\ket{h}\in H_{A}\longleftrightarrow V\ket{h}\in H_{B}\ten H_E \longleftrightarrow  h\in \B(H_E,H_B)\]
Using this notation, we suppress the isometry $V$ and view the channel and its complementary channel as the restriction of partial traces on the Stinespring space,
\begin{align}\N(\ket{h}\bra{k})=hk^*\pl ,\pl \N^E(\ket{h}\bra{k})=k^*h\pl, \pl\text{for}\pl\ket{h},\ket{k}\in H_{A}\subset H_B\ten H_E\pl.\label{id}\end{align}
Basically, the information about isometry $V$ is encoded into its image $ran(V)\subset H_E\ten H_B$. This identification will be used to simplify our notations throughout the paper.
\subsection{TRO's and TRO channels}
Let us recall that a \emph{ternary ring of operators} (TRO) $X$ between Hilbert spaces $H$ and $K$ is a closed subspace of $\B(H,K)$ stable under the triple product
\[x,y,z\in X \Rightarrow xy^*z\in X \pl .\] A TRO $X$ is a corner of its linking $C^*$-algebra $\mathcal{A}(X)$ introduced in \cite{brown},
\[\mathcal{A}(X)=\text{span} \{\left[\begin{array}{cc}xy^*& z\\w^* &v^*u\end{array}\right]| \pl x,y,z,u,v,w\in X\}=\left[\begin{array}{cc}\mathcal{L}(X)& X\\X^* &\mathcal{R}(X)\end{array}\right]\pl.\]
The two diagonal blocks are $C^*$-algebras,
 \[\mathcal{L}(X)=\text{span}\{\pl xy^*|\pl x,y\in X\}\subset \B(K)\pl,\pl \mathcal{R}(X)=\text{span}\{\pl x^*y \pl |\pl x,y\in X\}\subset \B(H) \pl.\]
$\mathcal{L}(X)$ is called the \emph{left algebra} of $X$ and $\mathcal{R}(X)$ is called the \emph{right algebra}. They together with $\A(X)$ play an important role in the study of TROs (see again e.g. \cite{kaur}). In particular, $X$ is a natural $\mathcal{L}(X)$-$\mathcal{R}(X)$ bimodule \[\mathcal{L}(X)X=X\pl ,\pl X\mathcal{R}(X)=X\pl .\]
In finite dimensions, TRO's are direct sums of rectangular matrices with multiplicity. Namely, a TRO $X$ is isomorphic to $\displaystyle \oplus_{i}( M_{n_i,m_i}\ten  1_{l_i})$, where $l_i$ is the multiplicity of $i$th diagonal block $M_{n_i, m_i}$. In this situation,
\[\mathcal{L}(X)=\oplus_i (M_{n_i}\ten  1_{l_i})\pl , \pl \mathcal{R}(X)=\oplus_i( M_{m_i}\ten  1_{l_i})\pl.\]
In most of our discussions, the multiplicities $l_i$ are irrelevant and we may simple write $X\cong \oplus_i \p M_{n_i,m_i}$.

\begin{prop}\label{pt1} Let $\N$ be a quantum channel with its Stinespring space $X$ being a $TRO$. Then $\N$ is a direct sum of partial traces and the ranges  $ran(\N)=\mathcal{L}(X), ran(\N^E)=\mathcal{R}(X)$.
\end{prop}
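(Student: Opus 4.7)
The plan is to invoke the structure theorem for finite-dimensional TROs and then read off both conclusions by a direct computation. First, I would use the classification recalled immediately before the proposition: since $X\subseteq \B(H_E,H_B)$ is a TRO, after suitable unitaries on $H_B$ and $H_E$ there are decompositions $H_B\supseteq \oplus_i (B_i\otimes L_i)$ and $H_E\supseteq \oplus_i(E_i\otimes L_i)$ with $|B_i|=n_i$, $|E_i|=m_i$, $|L_i|=l_i$, so that
$$X=\oplus_i \B(E_i,B_i)\otimes 1_{L_i},\qquad \L(X)=\oplus_i\B(B_i)\otimes 1_{L_i},\qquad \R(X)=\oplus_i\B(E_i)\otimes 1_{L_i}.$$
Because the isometry $V$ identifies $H_A$ with $X$ equipped with the Hilbert--Schmidt norm, this pulls back to an orthogonal decomposition $H_A=\oplus_i H_A^i$ in which each summand is canonically isomorphic to $B_i\otimes E_i$ (via $\B(E_i,B_i)\cong B_i\otimes E_i^*$).

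Second, to verify that $\N$ acts as a partial trace on each block, I would plug the TRO normal form into the identity $\N(\ket{h}\bra{k})=hk^*$ from \eqref{id}. For $\ket{a},\ket{b}\in H_A^i$ with corresponding matrices $\tilde a,\tilde b\in \B(E_i,B_i)$, the Stinespring image is $V\ket{a}=l_i^{-1/2}\,\tilde a\otimes 1_{L_i}$, where the factor $l_i^{-1/2}$ is forced by $\|\tilde a\otimes 1_{L_i}\|_{HS}^2=l_i\|\tilde a\|_{HS}^2$ so that $V$ is truly isometric. A direct partial-trace computation then yields
$$\N(\ket{a}\bra{b})\,=\,\tilde a\tilde b^*\otimes \frac{1_{L_i}}{l_i}.$$
The map $\ket{a}\bra{b}\mapsto \tilde a\tilde b^*$ is precisely the partial trace over $E_i$ on $B_i\otimes E_i\cong H_A^i$, while the extra factor $1_{L_i}/l_i$ is just the embedding of the output into $\L(X)=\oplus_i\B(B_i)\otimes 1_{L_i}$. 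Summing over $i$ exhibits $\N$ as a direct sum of partial traces (dressed with the maximally mixed state on each multiplicity factor). The analogous computation using $\N^E(\ket{h}\bra{k})=k^*h$ handles the complementary channel.

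Third, the range statements are essentially tautologies once the above is in place: since the rank-one operators $\{\ket{h}\bra{k}:h,k\in X\}$ span $\B(H_A)$, one has $\operatorname{ran}(\N)=\operatorname{span}\{hk^*:h,k\in X\}=\L(X)$ and $\operatorname{ran}(\N^E)=\operatorname{span}\{k^*h:h,k\in X\}=\R(X)$ by the very definitions of the left and right algebras.

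The main technical nuisance is bookkeeping, namely keeping track of the multiplicity spaces $L_i$ and the $l_i^{-1/2}$ normalization while translating between the TRO picture inside $\B(H_E,H_B)$ and the partial-trace picture on $B_i\otimes E_i$. Conceptually, the proposition just says that the TRO structure theorem and the formula \eqref{id} together force both conclusions.
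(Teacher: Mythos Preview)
Your proposal is correct and follows essentially the same approach as the paper: use the structure theorem $X\cong\oplus_i M_{n_i,m_i}$ and compute $\N$ blockwise via the identification \eqref{id}. You are in fact more thorough than the paper's proof in two respects: you track the multiplicity factors $l_i$ and the resulting $l_i^{-1/2}$ normalization explicitly (the paper suppresses this and only clarifies it in the subsequent Remark), and you spell out the range identities $\mathrm{ran}(\N)=\mathcal L(X)$, $\mathrm{ran}(\N^E)=\mathcal R(X)$ from the definitions of the left and right algebras, whereas the paper leaves these implicit.
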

\begin{proof} We can decompose $X$ as
\[{X}=\oplus_i {X_i}\pl , \pl{X_i}\cong M_{n_i,m_i}\pl.\]
 Because ${X_i}'s$ are diagonal summands that are mutually orthogonal subspaces with respect to the Hilbert-Schmidt norm, the channel $\N$ can be written as
\[\N(\ket{x}\bra{y})=xy^*=\oplus_i x_iy_i^*\pl,\pl \ket{x}=\oplus_i \ket{x_i}\pl, \ket{y}=\oplus_i \ket{y_i}     \pl,\]
where $x_i,y_i\in X_i$. It is sufficient to see on each subspace ${X_i}$, $\N$ is a partial trace. Indeed, by identifying ${X_i}\cong M_{n_i,m_i}\cong \mathbb{C}^{n_i}\ten \mathbb{C}^{m_i}$ as Hilbert spaces, we know from \eqref{ptrace} that $\N=\oplus_i \p \N_i$ and
\begin{align*}&\N_i(\ket{x_i}\bra{y_i})=x_iy_i^*=id_{n_i}\ten tr_{m_i}(\ket{x_i}\bra{y_i})\pl. \qedhere\end{align*}
\end{proof}
\begin{rem}{\rm To be precise, $X$ may be of the form $\oplus_i \p M_{n_i,m_i}\ten 1_{l_i}$ with the multiplicity $l_i$ for $i$-th block. Each direct summand $\N_i$ is a ``generalized'' partial traces as follows
\[\N_i(\rho_i)=(id_{n_i}\ten tr_{m_i}(\rho_i))\ten \omega_{l_k}\pl,\]
where $\omega_{l_k}=\frac{1}{l_k}1_{l_k}$ is the $l_k$-dimensional completely mixed state. Namely, $\N_i$ is a partial trace plus a dummy state $\omega_{l_k}$. The channel $\N=\oplus_i\N_i$ here is equivalent to the one without redundancy in Proposition \ref{pt}, in the sense that they can factor through each other. In most of situations they are equivalent and we will ignore the dummy multiplicity and use the simpler identification $X=\oplus_i \p M_{n_i,m_i}$.}
\end{rem}
Let $\tau=\frac{tr}{|H|}$ be the normalized trace on $\B(H)$. A positive operator $f$ is a normalized density if $\tau(f)=1$. Note that this normalization differs from the usual matrix trace --- for instance, the identity operator $1$ is a normalized density. This normalized trace is more natural in von Neumann algebras and will simplify our notations. Given a $C^*$-subalgebra $M\subset \B(H)$, the \emph{conditional expectation} $\E_M$ is the unique CPTP and unital map from $\B(H)$ onto $M$ (or $M+\mathbb{C}1$ if $M$ is nonunital) such that
\begin{align}\label{proj} \tau(\E_M(x)y) \lel \tau(xy) \quad \mbox{for}\quad  x\in \B(H)\pl,\pl  y\in M \pl.\end{align}
We say a positive operator $x$ is \emph{independent} of $M$ if $\E_M(x)=\tau(x) 1$, or equivalently
\[\tau(xy)=\tau(x)\tau(y) \pl, \pl  \text{for all}  \pl y\in M.\]
We say $x$ is \emph{strongly independent} of $M$ if all the powers $x^n$ are independent of $M$. The strong independence is equivalent to say that there exists a $C^*$-subalgebra $N$ such that $x\in N$ and $N$ is independent of $M$ (every element in $N$ is independent of $M$).

Now we define the modified TRO channels. Let $\N:\B(H_{A})\to \B(H_B)$ be a quantum channel with its Stinespring space $X\subset \B(H_E,H_B)$ being a TRO. Recall that with identification $H_A\cong \{\ket{x}\pl | x\in X\}$ as Hilbert spaces, $\N$ is written as
\[\N(\ket{x}\bra{y})=xy^* \pl, \pl x,y\in X \subset \B(H_E,H_B)\pl.\]
Then for any operator $f$ on $H_E$, we define the following map
\begin{align*}\N_f: \B(H_A) \to \B(H_B)\pl, \pl \N_f(\ket{x}\bra{y})=xfy^* \pl.\pl\end{align*}
Clearly, $\N_f=\N$ when $f=1$. Note that this is equivalent to the form \eqref{mchannel} in the introduction via $H_A\cong X$ as Hilbert spaces.
\begin{prop} \label{channel}Let $\N_f$ be defined as above. Suppose $f\in\B(H_E)$ is an operator independent of the right algebra $\R(X)$. Then $\E_{\mathcal{L}(X)}\circ\N_f=\tau(f)\N$.
In particular, $\N_f$ is a quantum channel if $f$ is a normalized density independent of $\R(X)$, and its Stinespring isometry is given by \[V_f:H_A\to H_B\ten H_E \pl, \pl V_f\ket{x}=\ket{x\sqrt{f}}\pl.\]
\end{prop}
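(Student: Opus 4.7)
The plan is to verify the identity $\E_{\L(X)}\circ\N_f=\tau(f)\N$ by testing it against the natural spanning set $\{uv^*:u,v\in X\}$ of the left algebra, and then to read off the Stinespring isometry of $\N_f$ by a direct computation on rank-one inputs. The two ingredients are the trace-duality characterization \eqref{proj} of the conditional expectation $\E_{\L(X)}$ and the independence hypothesis, which says $\tau(fg)=\tau(f)\tau(g)$ for every $g\in\R(X)$.

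For the first equality I would fix $x,y\in X$ and test $\N_f(\ket{x}\bra{y})=xfy^*$ against an arbitrary generator $uv^*\in\L(X)$ with $u,v\in X$. Starting from $tr_B(xfy^*uv^*)$ and applying cyclicity of the matrix trace, I rewrite the quantity as $tr_E(f\cdot y^*uv^*x)$. The key observation is that $y^*uv^*x$ belongs to $\R(X)$, because $\R(X)$ is an algebra containing every product of the form $a^*b$ with $a,b\in X$. Independence of $f$ from $\R(X)$ then gives $\tau(f\cdot y^*uv^*x)=\tau(f)\tau(y^*uv^*x)$, and reversing the cyclicity yields
\[\tau(xfy^*\cdot uv^*)=\tau(f)\,\tau(xy^*\cdot uv^*)\pl.\]
Since $xy^*\in\L(X)$, the uniqueness in \eqref{proj} forces $\E_{\L(X)}(xfy^*)=\tau(f)\,xy^*=\tau(f)\N(\ket{x}\bra{y})$. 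As operators of the form $\ket{x}\bra{y}$ span $\B(H_A)$, the identity $\E_{\L(X)}\circ\N_f=\tau(f)\N$ follows by linearity.

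Now suppose additionally that $f$ is a normalized density, i.e.\ $f\ge0$ and $\tau(f)=1$. I first check that $V_f\ket{x}:=\ket{x\sqrt f}$ defines an isometry on $H_A\cong X$ equipped with the Hilbert--Schmidt inner product: the calculation $\langle V_fx,V_fy\rangle=tr_E(\sqrt f\,x^*y\sqrt f)=tr_E(f\cdot x^*y)$ combined with $x^*y\in\R(X)$, independence, and $\tau(f)=1$ reduces to $tr_E(x^*y)=\langle x,y\rangle_{HS}$. A direct computation then shows
\[tr_E(V_f\ket{x}\bra{y}V_f^*)=(x\sqrt f)(y\sqrt f)^*=xfy^*=\N_f(\ket{x}\bra{y})\pl,\]
so $\N_f$ agrees, on a spanning set and hence everywhere, with the quantum channel produced by the Stinespring isometry $V_f$; in particular $\N_f$ is CPTP.

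The only subtle point, and the sole place where the independence hypothesis is actually used, is the recognition at each step that the operator sitting next to $f$ under the trace lies in $\R(X)$; the remaining manipulations amount to bookkeeping between the matrix trace and the normalized trace $\tau$. I do not anticipate any further obstacle beyond this basic verification.
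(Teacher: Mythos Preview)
Your proof is correct and follows essentially the same approach as the paper: both test the identity against elements of $\L(X)$ via the trace-duality characterization \eqref{proj}, use cyclicity to move the action to $H_E$, and invoke independence of $f$ from $\R(X)$. The only cosmetic difference is that you test against the spanning set $\{uv^*:u,v\in X\}$ while the paper tests against an arbitrary $a\in\L(X)$ (using $\L(X)X=X$ to conclude $y^*ax\in\R(X)$); the Stinespring/isometry verification is likewise the same.
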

\begin{proof}Let $a\in \mathcal{L}(X)$. By independence, we have that for any $\ket{x},\ket{y}$,
\begin{align*}\tau_B(a\p \N_f(\ket{x}&\bra{y}))=\frac{1}{|B|}tr_B(axfy^*)=
\frac{1}{|B|}tr_E(y^*axf)\\&=\frac{1}{|B|}\tau_E(f)tr_E(y^*ax)
=\frac{1}{|B|}\tau_E(f)tr_B(axy^*)=\tau_E(f)\tau_B(a\p \N(\ket{x}\bra{y})).
\end{align*}
Thus,
\[\E_{\mathcal{L}(X)}\circ\N_f(\ket{x}\bra{y})=\tau(f)\N(\ket{x}\bra{y}) \pl\]
holds for any rank one operator $\ket{x}\bra{y}$ and by linearity for arbitrary operators. Note that for positive $f$,
\[\N_f(\ket{x}\bra{y})=xfy^*= tr_E(\ket{x\sqrt{f}}\bra{y\sqrt{f}})
\pl.\]
Then it is sufficient to verify that $V_f$ is an isometry given $\tau_E(f)=1$. Indeed, we have
\begin{align*}\norm{\ket{x\sqrt{f}}}{}^2=tr_B(xfx^*)=tr_E(x^*xf)=\tau_E(f)tr_E(x^*x)=\norm{\ket{x}}{}^2.
\end{align*}
In the second last equality we use the assumption that $f$ is independent of $\R(X)$.
\end{proof}
We introduce the following notation for the normalized densities with the stronger independence.
\begin{defi} We say $\N$ is a \emph{TRO channel} if its Stinespring space is a TRO. We say $f\in \B(H_E)$ is a \emph{symbol} of $\N$ if $f$ is a normalized density {\bf strongly independent} of $\mathcal{R}(X)$. Then we define
\[\N_f(\ket{x}\bra{y})=xfy^*\pl, \pl x,y\in X\]
as a \emph{modified TRO channel}.
\end{defi}
\subsection{Local comparison property}
Recall that for $1\le p\le \infty$,  $\norm{a}{p}=tr(|a|^p)^{1/p}$ represents the Schatten $p$-class $S_p$ norm and $\norm{f}{p,\tau}=\tau(|f|^p)^{1/p}$ the $L_p$ norm with respect to normalized trace $\tau$. We fix the relation $\frac{1}{p}+\frac{1}{p'}=1$.
\begin{theorem}\label{pcom}Let $\N$ be a TRO channel with Stinespring space $X$. Let $f\in \B(H_E)$ be a symbol of $\N$. Then for any positive operators $\si\in \mathcal{L}(X)$ and $\rho$, 
\begin{align}&\text{i)}\pl\pl\norm{\N (\rho)}{p}\le \norm{\N_f(\rho)}{p}\le \norm{f}{p,\tau}\norm{\N (\rho)}{p}\pl, \\
&\text{ii)}\pl\pl\norm{\si^{-\frac{1}{2p'}}\N (\rho)\si^{-\frac{1}{2p'}}}{p}\le \norm{\si^{-\frac{1}{2p'}}\N_f (\rho)\si^{-\frac{1}{2p'}}}{p}\le \norm{f}{p,\tau}\norm{\si^{-\frac{1}{2p'}}\N (\rho)\si^{-\frac{1}{2p'}}}{p}\pl. \label{local} \end{align}
\end{theorem}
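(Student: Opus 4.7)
Statement (i) is the special case of (ii) obtained by taking $\sigma$ to be the support projection $e_\L\in\L(X)$ of $\L(X)\subset\B(H_B)$: both $\N(\rho)$ and $\N_f(\rho)$ are supported on $e_\L$ (every $x_ifx_j^*$ has range in $\mathrm{supp}\,\L(X)$), and $e_\L^{-1/(2p')}=e_\L$ acts as the identity there. I therefore focus on (ii). The lower bound is a data-processing statement for the conditional expectation $\E_{\L(X)}$. By Proposition~\ref{channel}, $\E_{\L(X)}\circ\N_f=\N$, and since $\sigma\in\L(X)$, the $\L(X)$-bimodule property of $\E_{\L(X)}$ gives
\[\E_{\L(X)}\bigl(\sigma^{-1/(2p')}\N_f(\rho)\sigma^{-1/(2p')}\bigr)=\sigma^{-1/(2p')}\N(\rho)\sigma^{-1/(2p')}.\]
Conditional expectations onto $C^*$-subalgebras are contractive on every Schatten $L_p$ (Kadison--Schwarz / Jensen), which yields the left-hand inequality in \eqref{local}.

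For the upper bound I reduce the mixed-state case to a pure-input inequality by a matrix dilation. Diagonalise $\rho=\sum_{k=1}^n \lambda_k\ket{x_k}\bra{x_k}$ with $\{x_k\}\subset X$ Hilbert--Schmidt orthonormal, and set
\[Z:=\sum_{k}\sqrt{\lambda_k}\,x_k\otimes\langle k|\in\B(H_E\otimes\CC^n,H_B),\]
i.e.\ the row vector with entries $\sqrt{\lambda_k}\,x_k\in X$. A direct computation gives $\N_f(\rho)=Z(f\otimes 1_n)Z^*$ and $\N(\rho)=ZZ^*$. The row space $R_n(X):=M_{1,n}(X)$ is itself a TRO with $\L(R_n(X))=\L(X)$ and $\R(R_n(X))=\R(X)\otimes M_n$; moreover, $f\otimes 1_n$ is strongly independent of $\R(R_n(X))$ (the conditional expectation onto $\R(X)\otimes M_n$ factorises as $\E_{\R(X)}\otimes\mathrm{id}_{M_n}$) and satisfies $\|f\otimes 1_n\|_{p,\tau}=\|f\|_{p,\tau}$. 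Using $\L(X)\cdot X=X$ I absorb the weight, $\tilde Z:=\sigma^{-1/(2p')}Z\in R_n(X)$, so (ii) reduces to the pure-input inequality
\[\|\tilde Z(f\otimes 1_n)\tilde Z^*\|_p\le \|f\|_{p,\tau}\,\|\tilde Z\tilde Z^*\|_p.\]

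For this pure-input bound (write $y=\tilde Z$, $g=f\otimes 1_n$) I use $\|ygy^*\|_p=\|yg^{1/2}\|_{2p}^2$ together with trace cyclicity to get
\[\|yg^{1/2}\|_{2p}^{2p}=\operatorname{tr}\bigl((y^*y\cdot g)^p\bigr).\]
The Araki--Lieb--Thirring inequality (valid for $p\ge 1$) bounds this by $\operatorname{tr}((y^*y)^p g^p)$. Crucially, $(y^*y)^p\in\R(R_n(X))$ while $g^p=f^p\otimes 1_n$ lies in a subalgebra independent of $\R(R_n(X))$, so strong independence factorises the trace,
\[\operatorname{tr}\bigl((y^*y)^p g^p\bigr)=\operatorname{tr}\bigl((y^*y)^p\bigr)\cdot\tau(g^p)=\|y\|_{2p}^{2p}\,\|f\|_{p,\tau}^{p}.\]
Combining the three displays yields $\|ygy^*\|_p\le \|f\|_{p,\tau}\|yy^*\|_p$ and hence the upper bound in (ii).

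The main obstacle is the bookkeeping for the row-TRO dilation: carefully verifying that $R_n(X)$ is a TRO with right algebra $\R(X)\otimes M_n$ and that strong independence of $f$ from $\R(X)$ lifts to strong independence of $f\otimes 1_n$ from $\R(X)\otimes M_n$, together with the identity $\|f\otimes 1_n\|_{p,\tau}=\|f\|_{p,\tau}$. Once this structural reduction is in place, the proof is a compact combination of Araki--Lieb--Thirring and the tracial factorisation afforded by independence; a Stein--Hirschman complex-interpolation argument between the tracial equality at $p=1$ (where independence gives $\|\cdot\|_{2}$-equality) and the operator-inequality at $p=\infty$ provides an alternative route to the same estimate.
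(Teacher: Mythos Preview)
Your argument is correct and takes a genuinely different route from the paper for the upper bound. Both proofs handle the lower bound the same way (via $\E_{\L(X)}\circ\N_f=\N$; you use the bimodule property plus $L_p$-contractivity of the conditional expectation, the paper cites data processing for the sandwiched R\'enyi divergence --- these amount to the same thing here), and both reduce the upper bound to a rank-one estimate by a purification/dilation: your row-TRO element $Z\in R_n(X)$ is exactly the paper's $\hat\eta\in\B(H_A\otimes H_E,H_B)$ in different notation. The divergence is in how the inequality $\|y g^{1/2}\|_{2p}\le\|g^{1/2}\|_{2p,\tau}\|y\|_{2p}$ is established. The paper runs a Stein complex-interpolation argument --- the ``alternative route'' you mention in your last sentence is in fact the paper's own proof: it sets up a Kosaki-type scale $\tilde X_{p,\sigma}$ (Theorem~\ref{interpolation} in the Appendix), takes the analytic family $T(z)=\sigma^{z/2}\xi(z)(1\otimes a^{pz})$ with $a=f^{1/2}/\|f^{1/2}\|_{2p,\tau}$, and uses strong independence only at the $L_2$ endpoint to obtain $\|\sigma^{1/2}\xi(1+it)(1\otimes a^p)\|_2=\|\sigma^{1/2}\xi(1+it)\|_2$. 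You instead apply Araki--Lieb--Thirring to pass from $\operatorname{tr}((g^{1/2}y^*yg^{1/2})^p)$ to $\operatorname{tr}((y^*y)^p g^p)$ and then split the trace directly via independence of $g^p=f^p\otimes 1_n$ from $\R(X)\otimes M_n$. Your proof is shorter and more elementary --- it avoids interpolation theory entirely --- and it makes completely transparent why \emph{strong} independence (of every power $f^p$, not just $f$) is needed. The interpolation argument, on the other hand, fits the vector-valued $L_p$ framework used elsewhere in the paper (e.g.\ Theorem~\ref{negativecb}) and would extend more readily to operator-coefficient or infinite-dimensional settings.
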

\begin{proof}We give the proof for ii). The argument for i) is similar and easier. Let $\E_{\mathcal{L}(X)}:\B(H_B)\to \mathcal{L}(X)$ be the conditional expectation onto $\mathcal{L}(X)$. By definition $\E_{\mathcal{L}(X)}$ is a quantum channel. From Proposition \ref{channel} and the assumption $f$ is independent to $\R(X)$,  $\E_{\mathcal{L}(X)}\circ\N_f=\N$ and $\E_{\mathcal{L}(X)}(\si)=\si$ for $\si\in \mathcal{L}(X)$, since $\mathcal{L}(X)$ is the range of $\N$. Then
the first inequality of \eqref{local} is an direct consequence of data processing inequality of R{\'e}nyi sandwiched relative entropy (see Section $3$ and e.g. \cite{MLDS}). Let $\si^{-1}$ be the inverse of $\si$ on its support. Write $\rho=\eta\eta^*$ with $\eta\in \B(H_A, X)\subset \B(H_A, H_B\ten H_E)$ for some Hilbert space $H_A$. Denote by $\hat{\eta}$ the corresponding operator of $\eta$ via $\B(H_A, H_B\ten H_E)\cong\B(H_A\ten H_E, H_B)$. We can write
\[\norm{\si^{-\frac{1}{2p'}}\N_f (\rho)\si^{-\frac{1}{2p'}}}{p}\pl =\pl  \norm{\si^{-\frac{1}{2p'}}\hat{\eta}(1_A\ten f^{\frac12})}{S_{2p}(H_A\ten H_E, H_B)}^2\pl .\]
Here $S_{2p}(H_A\ten H_E, H_B)$ is the Schatten $p$-class of operators in $\B(H_A\ten H_E, H_B)$. Thus, it is sufficient to show that
\[\norm{\si^{-\frac{1}{2p'}}\hat{\eta}(1_{A}\ten f^{\frac12})}{S_{2p}(H_A\ten H_E, H_B)} \pl\le \pl \norm{f^\frac{1}{2}}{2p,\tau}\norm{\si^{-\frac{1}{2p'}}\hat{\eta}}{S_{2p}(H_A\ten H_E, H_B)} \pl. \]
We prove it by a complex interpolation argument (see Appendix for basic information about complex interpolation). Define the norms
\[\norm{x}{p,\si}:\lel\norm{\si^{\frac{1}{p}}x}{p}\pl,\]
 and denote $\tilde{X}_{p,\si}$ as the space $H_A\ten X$ equipped with the above norm.
Theorem \ref{interpolation} in the Appendix verifies that $\tilde{X}_{p,\si}$ forms a interpolation family and in particular
\[\tilde{X}_{\si, 2p}=[\tilde{X}_\infty, \tilde{X}_{\si, 2}]_{\frac{1}{p}}. \]
Now assume that $\norm{\si^{-\frac{1}{2p'}}\hat{\eta}}{2p} \pl < 1$, we have
$\norm{\si^{\frac{1}{2p}}\xi}{2p} \pl< 1 $ where $\xi=\si^{-\frac{1}{2}}\hat{\eta} $ in $\tilde{X}\cong X\ten \B(H_A,\mathbb{C})$. Then there exists an analytic function $\xi: S=\{z|\pl 0\le Re (z)\le 1\}\to  \tilde{X}$ such that $\xi(1/p)=\xi$ and moreover
\[\norm{\xi(it)}{\infty}< 1 \pl, \pl \norm{\si^{\frac{1}{2}}\xi(1+it)}{2}< 1\pl.\]
Given this,
we define another analytic function $T(z)=\si^{\frac{z}{2}} \xi(z)(1_{A}\ten a^{pz})$, where $a=\frac{f^\frac{1}{2}}{\norm{f^\frac{1}{2}}{2p,\tau}}$. Observe that
\begin{align*}\norm{T(it)}{\infty}&=\norm{\si^{\frac{it}{2}} \xi(it)1_{A}\ten a^{itp}}{\infty}=\norm{\xi(it)}{\infty}\pl < 1,\\ \norm{T(1+it)}{2}&=\norm{\si^{\frac{1+it}{2}} \xi(1+it)(1\ten a^{p(1+it)})}{2}=\norm{\si^{\frac{1}{2}} \xi(1+it)(1\ten a^{{p}})}{2}\\&=\norm{\si^{\frac{1}{2}} \xi(1+it)}{2}\norm{ a^{p}}{2} \pl < 1\pl .\end{align*}
The last equality uses the fact $\norm{ a^{p}}{2}=1$ and the assumption $f$ is strongly independent of $\mathcal{R}(X)$.
By Stein's interpolation theorem (Theorem \ref{interpolation}), we obtain
\[\norm{T(1/p)}{2p}=\norm{\si^{\frac{1}{p}} \xi(1/p)1_{A}\ten a}{2}=\frac{1}{\norm{\sqrt{f}}{2p,\tau}}\norm{ \si^{-\frac{1}{2p'}}\hat{\eta}(1_A\ten f^{\frac12})}{2p}\le 1\pl,\]
which completes the proof.
\end{proof}
It is clear from the argument that the independence for all the powers $f^{p}$ is needed for interpolation. The above result is a local property which applies for every input $\rho$. We  naturally consider the restrictions of TRO channels on subspaces. This enables us to compare with channels whose Stinespring spaces are not necessarily TRO. Recall that we use the notation $h$ for the operators in $\B(H_E,H_B)$ corresponding to the vector $V\ket{h}\in H_B\ten H_E$.

\begin{defi} Let $\N:\B(H_{A})\to \B(H_{B})$ be a quantum channel with Stinespring space $Y$. We call a normalized density $f\in \B(H_E)$ a \emph{symbol} of $\N$ if $f$ is strongly independent of the $C^*$-algebra $\mathcal{L}(Y)$ generated by $YY^*=\{hk^*|\pl \ket{h},\ket{k}\in Y\}$. For each symbol $f$, we define the modified channel $\N_f$ as follows,
\begin{align}\label{trochannel}\N_f(\ket{x}\bra{y})=xfy^*\pl, \pl\pl \ket{x},\ket{y}\in H_A \subset H_B\ten H_E\pl.\end{align}
\end{defi}

\begin{rem}{\rm  a) Let $\mathcal{L}(Y)$ be the $C^*$-algebra generated by $\{x^*y\pl|\pl x,y\in Y\}$ and $\mathcal{R}(Y)$ be the $C^*$-algebra generated by $\{yx^*|\pl x,y\in Y\}$. Then $X=Y\mathcal{R}(Y)=\mathcal{L}(Y)Y$ is a TRO and actually the smallest TRO containing $Y$. Therefore every symbol of $\N$ gives rise to a modified  $X$-TRO channel $\M_f(|x\ran\lan y|)=xfy^*, x,y\in X$ and $\N_f$ is the restriction of $\M_f$ on $H_A\cong Y$.

b) Using this terminology every channel is a restriction of a TRO channel with a trivial symbol $1$. However, the smallest TRO obtained from the minimal Stinespring dilation may produce a large left algebra $\mathcal{L}(Y)$, which leads to ineffective capacity estimates. Our estimates in the next section are more effective when the TRO is small. }
\end{rem}

The local comparison property automatically generalizes to the restrictions onto subspaces.
\begin{cor}\label{slocal}Let $\N$ be a quantum channel and $f$ be a symbol of $\N$. Then for any positive operators $\si\in \text{Ran}(\N)$ and $\rho$,
\begin{align*}\norm{\si^{-\frac{1}{2p'}}\N (\rho)\si^{-\frac{1}{2p'}}}{p}\le \norm{\si^{-\frac{1}{2p'}}\N_f (\rho)\si^{-\frac{1}{2p'}}}{p}\le \norm{f}{p,\tau}\norm{\si^{-\frac{1}{2p'}}\N (\rho)\si^{-\frac{1}{2p'}}}{p}\pl. \end{align*}
\end{cor}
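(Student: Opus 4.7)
The plan is to reduce to the TRO case treated in Theorem \ref{pcom} by embedding the Stinespring space $Y$ of $\N$ into the enveloping TRO $X=\mathcal{L}(Y)Y=Y\mathcal{R}(Y)$ introduced in the preceding remark. A short algebraic check, $X^*X=\mathcal{R}(Y)\,Y^*Y\,\mathcal{R}(Y)\subset\mathcal{R}(Y)$ together with the obvious reverse inclusion $\mathcal{R}(Y)\subset\mathcal{R}(X)$, shows that $\mathcal{R}(X)=\mathcal{R}(Y)$. Consequently $f$, being strongly independent of the algebra generated by $Y^*Y$, is strongly independent of $\mathcal{R}(X)$ and is therefore a legitimate symbol of the TRO channel $\M(\ket{x}\bra{y}):=xy^*$ ($x,y\in X$) in the sense required by Theorem \ref{pcom}; Proposition \ref{channel} then ensures that $\M_f(\ket{x}\bra{y})=xfy^*$ is itself a quantum channel, acting on the Hilbert-space avatar $\tilde{H}_A$ of $X$.

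Second, the inclusion $Y\hookrightarrow X$ induces an isometric embedding $\iota:H_A\to\tilde{H}_A$. For any $\ket{x},\ket{y}\in H_A$ with corresponding operators $x,y\in Y\subset X$,
\[\M(\iota\ket{x}\bra{y}\iota^*)=xy^*=\N(\ket{x}\bra{y}),\qquad \M_f(\iota\ket{x}\bra{y}\iota^*)=xfy^*=\N_f(\ket{x}\bra{y}),\]
and by linearity $\M(\iota\rho\iota^*)=\N(\rho)$ and $\M_f(\iota\rho\iota^*)=\N_f(\rho)$ for every $\rho\in\B(H_A)$.

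Finally, since $\text{Ran}(\N)=\operatorname{span}\{xy^*:x,y\in Y\}\subset\mathcal{L}(X)$, any positive $\si\in\text{Ran}(\N)$ is already a positive element of $\mathcal{L}(X)$. Applying Theorem \ref{pcom}(ii) to $\M,\M_f$ with this $\si$ and with input $\tilde{\rho}:=\iota\rho\iota^*\ge 0$, and then substituting $\M(\tilde{\rho})=\N(\rho)$ and $\M_f(\tilde{\rho})=\N_f(\rho)$, delivers both inequalities claimed in the corollary. The argument is essentially structural once the enveloping TRO has been identified, so I anticipate no serious obstacle; the one step warranting real verification is the identity $\mathcal{R}(X)=\mathcal{R}(Y)$ in the first paragraph, since this is what allows the symbol hypothesis on $f$ to transfer from $\N$ to the larger channel $\M$ and thereby unlock Theorem \ref{pcom}.
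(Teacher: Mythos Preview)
Your proposal is correct and follows precisely the approach the paper intends: the paper does not give a separate proof but simply states that the local comparison property ``automatically generalizes to the restrictions onto subspaces,'' relying on the preceding remark, which already identifies the enveloping TRO $X=Y\mathcal{R}(Y)=\mathcal{L}(Y)Y$ and notes that $\N_f$ is the restriction of the modified $X$-TRO channel $\M_f$. You have faithfully unpacked that remark, including the verification $\mathcal{R}(X)=\mathcal{R}(Y)$ needed to transfer the symbol hypothesis, so your argument is essentially the paper's proof written out in full.
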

The definition of symbols is compatible with tensor products.
\begin{prop} \label{tensor}Let $\N: \B(H_A)\to \B(H_B)$ and $\M: \B(H_{A'})\to \B(H_{B'})$ be two quantum channels. Let $f$ be a symbol of $\N$ and $g$ be a symbol of $\M$. Then $f\ten g$ is a symbol of $\N \ten \M$ and
${(\N\ten \M)_{f\ten g} =\N_f\ten \M_g }$. In particular, for any channel $\M$ the identity operator $1$ is always a symbol and
$(\N\ten \M)_{f\ten 1} =\N_f\ten \M$.
\end{prop}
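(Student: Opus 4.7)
My plan is to verify the two conditions of the symbol definition for $f\ten g$ with respect to $\N\ten \M$ and then evaluate both sides of the claimed identity on matrix units; the argument is essentially a bookkeeping check that every piece factors across the tensor product.

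First I would set up the Stinespring data of the product channel. If $V,V'$ are Stinespring isometries of $\N,\M$ with ranges $Y\su \B(H_E,H_B)$ and $Y'\su \B(H_{E'},H_{B'})$, then after the natural flip identifying $H_B\ten H_E\ten H_{B'}\ten H_{E'}$ with $(H_B\ten H_{B'})\ten(H_E\ten H_{E'})$, the map $V\ten V'$ is a Stinespring isometry of $\N\ten \M$ with range $Y\ten Y'\su \B(H_E\ten H_{E'},H_B\ten H_{B'})$. In particular, the associated algebra $\L(Y\ten Y')\su \B(H_E\ten H_{E'})$ is generated by the products $(x\ten x')^*(y\ten y')=x^*y\ten(x')^*y'$ and coincides with the spatial tensor $\L(Y)\ten \L(Y')$.

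Next I would verify that $f\ten g$ is a symbol. Since $\tau_{E\ten E'}$ splits as $\tau_E\ten\tau_{E'}$ on elementary tensors, $\tau(f\ten g)=\tau(f)\tau(g)=1$, so $f\ten g$ is a normalized density. For strong independence, $(f\ten g)^n=f^n\ten g^n$; for $a\in \L(Y)$ and $b\in \L(Y')$, the assumed strong independence of $f$ and $g$ from their respective algebras gives
\[ \tau\bigl((f\ten g)^n(a\ten b)\bigr)=\tau_E(f^n a)\tau_{E'}(g^n b)=\tau_E(f^n)\tau_{E'}(g^n)\tau_E(a)\tau_{E'}(b)=\tau\bigl((f\ten g)^n\bigr)\tau(a\ten b). \]
Extending by linearity to all of $\L(Y)\ten \L(Y')=\L(Y\ten Y')$ establishes strong independence of $f\ten g$.

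Finally I would check the product formula by direct evaluation on rank-one matrix units: for $x,y\in Y$ and $x',y'\in Y'$,
\[ (\N\ten \M)_{f\ten g}(\ket{x\ten x'}\bra{y\ten y'})=(x\ten x')(f\ten g)(y\ten y')^*=xfy^*\ten x'g(y')^*, \]
which equals $\N_f(\ket{x}\bra{y})\ten \M_g(\ket{x'}\bra{y'})$; linearity completes the argument. The ``in particular'' statement is then immediate: the identity $1$ satisfies $1^n=1$ and $\tau(1\cdot a)=\tau(a)=\tau(1)\tau(a)$, so it is a symbol of every channel $\M$ with $\M_1=\M$, and applying the first part with $g=1$ yields $(\N\ten \M)_{f\ten 1}=\N_f\ten \M$. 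I anticipate no real obstacle; the only care required is the tensor-factor flip when identifying the Stinespring space of $\N\ten \M$, a natural isomorphism preserving all the relevant algebraic structure.
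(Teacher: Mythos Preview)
Your proof is correct and follows essentially the same route as the paper's: identify the Stinespring space of $\N\ten\M$ as the tensor product of the individual Stinespring spaces, verify that the relevant $C^*$-algebra on the environment side factors as a tensor product so that strong independence of $f\ten g$ follows from that of $f$ and $g$ separately, and check the modified-channel formula on rank-one operators. The only cosmetic difference is that the paper phrases the independence step through TROs $X^\N,X^\M$ containing the Stinespring spaces (so that $f\ten g$ is strongly independent of $\mathcal{R}(X^\N)\ten\mathcal{R}(X^\M)=\mathcal{R}(X^\N\ten X^\M)$), whereas you work directly with the $C^*$-algebras generated by $Y^*Y$ and $(Y')^*Y'$; by the remark following the symbol definition these two formulations are equivalent.
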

\begin{proof}
Let $Y^\N$ be the Stinespring spaces of $\N$ and $Y^\M$ be the Stinespring spaces of $\M$. Since $f$ and $g$ are symbols of $\N$ and $\M$ respectively, there exists TROs $X^\N\subset \B(H_E,H_B)$ containing $Y^\N$ and $X^\M\subset\B(H_{E'},H_{B'})$ containing $Y^\M$ such that $f$ is strongly independent of $\mathcal{R}(X^\N)$ and $g$ is strongly independent of $\mathcal{R}(X^\M)$. Then
\[X^\N\ten X^\M \subset \B(H_{E},H_{B})\ten \B(H_{E'},H_{B'})\cong \B(H_{E}\ten H_{E'},H_{B}\ten H_{B'})\]
is a TRO containing the Stinespring space $Y^{\N\ten\M}=Y^\N\ten Y^\M$. It is easy to see that $f\ten g\in \B(H_E)\ten \B(H_B)$ is strongly independent of $\mathcal{R}(X^\N\ten X^\M)=\mathcal{R}(X^\N)\ten\mathcal{R}(X^\M)$. Moreover, $f\ten g$ is again a normalized density hence a symbol for $\N\ten \M$. For $\ket{h_0},\ket{k_0}\in {H_A}, \ket{h_1},\ket{k_1}\in {H_{A'}}$,
\begin{align*}&(\N\ten \M)_{f\ten g}(\ket{h_0}\bra{k_0}\ten \ket{h_1}\bra{k_1})=(h_0\ten h_1)(f\ten g)(k_0\ten k_1)\\&=(h_0fk_0^*)\ten (h_1gk_1^*)=\N_f(\ket{h_0}\bra{k_0})\ten \M_g(\ket{h_1}\bra{k_1})\pl. \qedhere
\end{align*}
\end{proof}

\section{Applications to Capacity estimates}
\subsection{Entropic inequalities}
Recall that the relative entropy $D(\rho||\si)$ for two states $\rho$ and $\si$ is defined as,
\begin{align*} D(\rho||\si)=
\begin{cases}
tr(\rho\log \rho -\rho\log \si) & \text{if}\pl\pl \text{supp}(\rho)\subset\text{supp}(\si)  \\
+\infty      &  \text{else}
\end{cases}
 \pl.\end{align*}
For a bipartite state $\rho^{AB}$, the coherent information $I_c(A\ran B)_\rho$, mutual information $I(A:B)_\rho$ are given by,
\begin{align*}
 &I_c(A\ran B)_\rho=H(\rho^{AB})-H(\rho^B)=\inf_\si D(\rho^{AB}||1_A\ten \si^B) \pl , \\\pl &I(A:B)_\rho=H(\rho^{A})+H(\rho^B)-H(\rho^{AB})=\inf_\si D(\rho^{AB}||\rho^A\ten \si^B)\pl,
 \end{align*}
 where the infimum runs over all states $\si^B$ on $H_B$. The sandwiched R{\'e}nyi relative entropy $D_p(\rho||\si)$ and sandwiched R{\'e}nyi conditional entropy were introduced in \cite{MLDS,cconverse}. For ${1< p\le \infty}$, it can be written using Schatten $p$-norms as follows,
\begin{align*}&D_p(\rho||\si)=
p'\log \norm{\si^{-\frac{1}{2p'}}\rho\si^{-\frac{1}{2p'}}}{p} \pl\pl \text{(if finite)}
\pl ,\pl  \lim_{p\to 1} D_p(\rho||\si )=D(\rho||\si) \pl,\\
&H_p(A|B)_\rho=- \inf_{\si^B}D_p(\rho^{AB}||1_A\ten \si^B)\pl,\pl
\lim_{p\to 1}H_p(A|B)_\rho=H(AB)_\rho-H(B)_\rho\pl.
\end{align*}
The latter one connects to the vector-valued noncommutative $L_p$-spaces introduced by Pisier (see \cite{pvp}). Indeed, let $1\le p,q \le \infty$ and fix $1/r=|1/p-1/q|$. For a bipartite operator $\rho \in \B(H_A\ten H_B)$, the $S_q(A,S_p(B))$ norms are given as follows: for $p\le q$, $1/q+1/r=1/p$,
\begin{equation}\label{QPP}
  \|\rho\|_{S_q(A,S_p(B))} \lel
  \sup_{\|a\|_{S_{2r}(H_A)}\|b\|_{S_{2r}(H_A)}\le 1} \|(a\ten 1_B)\rho(b\ten 1_B)\|_{S_p(H_A\ten H_B)} \pl,
  \end{equation}
and for $p\ge q$, $1/p+1/r=1/q$,
 \begin{equation}\label{PQQ}
  \|\rho\|_{S_q(A,S_p(B))} \lel
  \inf_{\rho=(a\ten 1_B)\eta(b\ten 1_B)} \|a\|_{S_{2r}(H_A)}\|\eta\|_{S_p(H_A\ten H_B)}\|b\|_{S_{2r}(H_A)} \pl.
   \end{equation}
When $\rho$ is positive, it is sufficient to consider $a=b\ge 0$ in \eqref{QPP} and \eqref{PQQ}, and then the $S_1(S_p)$ norm connects to the sandwiched R{\'e}nyi conditional entropy as follows,
\begin{align*}-p'\log \|\rho\|_{S_1(B,S_p(A))}=H_p(A|B)_\rho \pl. \end{align*}
This observation enables us to translate norm estimates into entropic inequalities.
\begin{cor}\label{entropyin} Let $\N:\B(H_{A'})\to \B(H_B)$ be a channel and $f$ be a symbol of $\N$. Let $H_A$ be an arbitrary Hilbert space and $\rho^{AA'}$ be a bipartite state $H_A\ten H_{A'}$. Denote $\pl\omega^{AB}_f=id_A\ten\N_f(\rho^{AA'})$ and $\pl\omega^{AB}=id_A\ten\N(\rho^{AA'})$. Then the following inequalities hold
\begin{enumerate}
\item[i)] $H(AB)_{\omega}-\tau(f\log f)\le H(AB)_{\omega_f}\le H(AB)_{\omega}$;
\item[ii)]$I_c(A\ran B)_{\omega}\le I_c(A\ran B)_{\omega_f}\le I_c(A\ran B)_{\omega}+\tau(f \log f)$;
\item[iii)]$I(A;B)_{\omega}\le I(A;B)_{\omega_f}\le I(A;B)_{\omega}+\tau(f\log f)$.
\end{enumerate}
\end{cor}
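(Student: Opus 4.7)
The plan is to leverage Corollary \ref{slocal} applied to $id_A\ten\N$ with symbol $1_A\ten f$ (which is a symbol by Proposition \ref{tensor}), converting the local Schatten $p$-norm comparison into a sandwiched R\'enyi relative entropy comparison via $D_p(\rho\|\si)=p'\log\|\si^{-1/(2p')}\rho\si^{-1/(2p')}\|_p$, and then passing to the limit $p\to 1^+$. This should yield the master inequality
\[
D(\omega^{AB}\|\si)\pl\le\pl D(\omega_f^{AB}\|\si)\pl\le\pl D(\omega^{AB}\|\si)+\tau(f\log f)
\]
valid for any positive $\si\in \B(H_A)\ten \L(X)$, where $X$ is the enveloping TRO of the Stinespring space of $\N$. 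The key analytical input is the derivative identity
\[
\lim_{p\to 1}p'\log\|f\|_{p,\tau}\lel\lim_{p\to 1}\frac{1}{p-1}\log\tau(f^p)\lel\tau(f\log f),
\]
which holds because $\tau(f)=1$ forces $\log\tau(f^p)|_{p=1}=0$ and the derivative at $p=1$ is exactly $\tau(f\log f)$.

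From the master inequality, each claim follows by a judicious choice of $\si$. For (i), I would take $\si^{AB}=1_{AB}$ and use $H(\rho)=-D(\rho\|1)$; the two-sided inequality translates directly into the two bounds on $H(AB)_{\omega_f}$ after the sign flip. For (iii), first observe that $\omega^A=\omega_f^A=\rho^A$ since both $\N$ and $\N_f$ act only on $A'$; then choosing $\si=\rho^A\ten\omega^B$, the optimiser of $\inf_{\si^B}D(\omega\|\rho^A\ten\si^B)=I(A;B)_\omega$, yields the upper bound $I(A;B)_{\omega_f}\le I(A;B)_\omega+\tau(f\log f)$. The matching lower bound $I(A;B)_\omega\le I(A;B)_{\omega_f}$ is a data-processing step: Proposition \ref{channel} gives $id_A\ten\E_{\L(X)}(\omega_f)=\omega$, so the CPTP map $id_A\ten\E_{\L(X)}$ turns $\omega_f$ into $\omega$, and mutual information is monotone under local CPTP maps on $B$. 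Part (ii) is analogous, with $\si=1_A\ten\omega^B$ as the coherent-information optimiser for the upper bound and data processing of coherent information for the lower bound.

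The main obstacle I anticipate is verifying admissibility of the chosen $\si$'s for Corollary \ref{slocal}: one needs $1_B$ and $\omega^B$ to sit inside $\L(X)$. The operator $\omega^B=\N(\rho^{A'})$ is immediate, while for $1_B$ one must check that the unit of $\B(H_B)$ is the unit of $\L(X)$; this holds as soon as $\L(X)$ acts nondegenerately on $H_B$, which is true for the minimal TRO hull of the Stinespring space of $\N$. Once these admissibility points are dispatched, the rest of the argument is a routine $p\to 1$ limit combined with two standard invocations of data-processing monotonicity.
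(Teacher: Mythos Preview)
Your proposal is correct and follows essentially the same route as the paper: tensor the symbol with the identity via Proposition~\ref{tensor}, invoke the local comparison (Corollary~\ref{slocal}) with $\si$ in the left algebra, use the factorization $\E_{\mathcal L(X)}\circ\N_f=\N$ for the data-processing lower bounds, and pass to the limit $p\to 1^+$. The only cosmetic differences are that the paper keeps the argument at the R\'enyi-$p$ level and records the vector-valued norm inequality \eqref{conditionalp} before taking the limit (which it reuses later for strong converse estimates), and that the paper deduces (iii) from (ii) via $I(A;B)=H(A)+I_c(A\rangle B)$ rather than choosing the mutual-information optimizer directly; your admissibility check that $\omega^B=\N(\rho^{A'})\in\mathcal L(X)$ is exactly the point the paper handles by showing the $\mathcal L(X)$-restricted infimum coincides with the full one.
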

\begin{proof}By Lemma \ref{tensor}, $f\ten 1$ is a symbol of $ \N\ten id_A$ and $(\N\ten id_A)_{f\ten 1}=\N_f\ten id_A $. The first inequality in Theorem \ref{slocal} gives $\norm{\omega}{p}\le  \norm{\omega_f}{p}\le \norm{f}{p,\tau}\norm{\omega}{p}$. Then i) follows from taking  logarithm and the limit
\begin{align*}&\lim_{p\to 1^+} p'\log \norm{\omega}{p}=-\lim_{p\to 1^+}H_p(\omega)=-H(\omega)\pl,\pl \lim_{p\to 1^+} p'\log \norm{f}{p,\tau}= \tau(f\log f)\pl.
 \end{align*}
For ii), denote $\E:=\E_{\mathcal{L}(X)}$ the conditional expectation onto the the left algebra $\mathcal{L(X)}=ran(\N)$. We have $\E\circ \N_f= \N$ by Proposition \ref{channel}. Then $(\E\ten id_A)(\omega_f)=\omega$ and the data processing inequality implies
\begin{align*}
I_c(A\ran B)_{\omega_f}\ge I_c(A\ran B)_{\omega}\pl.
\end{align*}
For the other direction, applying Theorem \ref{local},
\begin{align*}
\norm{\omega_f}{S_1(B,S_p(A))}&=\inf_{\si^B}\norm{(\si^{-\frac{1}{2p'}}\ten 1_A)\omega_f(\si^{-\frac{1}{2p'}}\ten 1_A)}{p} \\ &\le \inf_{\si^B\in \mathcal{L}(X)}\norm{(\si^{-\frac{1}{2p'}}\ten 1_A)\omega_f(\si^{-\frac{1}{2p'}}\ten 1_A)}{p}
\\&\le \norm{f}{p,\tau}\inf_{\si^B\in \mathcal{L}(X)}\norm{(\si^{-\frac{1}{2p'}}\ten 1_A)\omega(\si^{-\frac{1}{2p'}}\ten 1_A)}{p}\pl.
\end{align*}
Note that $\E(\omega)=\omega$ and by the data processing inequality,
\begin{align*}
\norm{\omega}{S_1(B,S_p(A))}&=\inf_{\si^B}\norm{(\si^{-\frac{1}{2p'}}\ten 1_A)\omega(\si^{-\frac{1}{2p'}}\ten 1_A)}{p}
\\&\ge \inf_{\si^B} \norm{(\E(\si)^{-\frac{1}{2p'}}\ten 1_A)\omega(\E(\si)^{-\frac{1}{2p'}}\ten 1_A)}{p}
\\&\ge \inf_{\si^B\in \mathcal{L}(X)} \norm{(\si^{-\frac{1}{2p'}}\ten 1_A)\omega(\si^{-\frac{1}{2p'}}\ten 1_A)}{p}
\\&\ge \inf_{\si^B}\norm{(\si^{-\frac{1}{2p'}}\ten 1_A)\omega(\si^{-\frac{1}{2p'}}\ten 1_A)}{p}\pl.
\end{align*}
Thus,
\begin{align}\norm{\omega}{S_1(B,S_p(A))}\pl\le\pl \norm{\omega_f }{S_1(B,S_p(A))}\pl\le \pl\norm{f}{p,\tau}\norm{\omega}{S_1(B,S_p(A))}. \label{conditionalp}\end{align}
We obtain ii) via the limit
\[\lim_{p\to 1^+}p'\log \|\omega\|_{S_1(B,S_p(A))}=-\lim_{p\to 1^+} H_p(A|B)_\omega= I_c(A\ran B)_\omega\pl.\]
Finally, iii) is a consequence of ii) because
$I(A:B)=H(A)+I_c(A\ran B)$ and $\omega_f^A=\omega^A$.
\end{proof}
\begin{rem}{\rm a) The term $\tau(f\log f)$ corresponds to a normalized entropy that differs from the usual entropy by a constant. Namely, $\tau(f\log f)=\log |E|-H(\frac{1}{|E|}f)$, $|E|$ is the dimension of system, $\frac{1}{|E|}f$ is a density operator of the matrix trace.

b)  The inequality $\eqref{conditionalp}$ is of its own interests. It states that for any state $\rho^{AA'}$,
\begin{align*}
\norm{\N\ten id_A (\rho)}{S_1(B,S_p(A))}\pl\le\pl \norm{\N_f\ten id_A (\rho)}{S_1(B,S_p(A))}\pl\le \pl\norm{f}{p,\tau}\norm{\N\ten id_A (\rho)}{S_1(B,S_p(A))}\pl.\end{align*}
}
\end{rem}

\subsection{Capacity Bounds}
The comparison property naturally extends to capacities of quantum channels. Let us  recall the operational definitions of channel capacities.

Let $\N:\B(H_{A'})\to \B(H_{B})$ be a quantum channel and $V\in \B(H_{A'}, H_B\ten H_E)$ be its Stinespring isometry. A quantum code $\mathcal{C}$ over $\N$ is a triple
\[\mathcal{C}=(m, \E, \D)\pl ,\]
which consists of an encoding $\E:M_{m} \to \B({H_{A'}})$ and a decoding $\D: \B(H_{B})\to M_{m} $ as completely positive trace preserving maps.  $|\mathcal{C}|=m$ is the size of the code. The quantum communication fidelity $\F_Q$ of the code $\mathcal{C}$ is defined by
\[\F_Q(\C, \N)=\bra{\psi_m} id_{m}\ten(\D\circ\N\circ\E)(\ket{\psi_m}\bra{\psi_m})\ket{\psi_m}\pl ,\]
where $\psi_m=\frac{1}{m}\sum_{i,j} e_{ij}\ten e_{ij}$ is the maximally entangled state on $M_m\ten M_m$. A rate triple $(n,R,\epsilon)$ consists of the number $n$ of channel uses, the rate $R$ of transmission and the error $\epsilon\in [0,1]$. We say a rate triple $(n,R,\epsilon)$ is achievable on $\N$ for quantum communication if there exists a quantum code $\mathcal{C}$ of $\N^{\ten n}$ such that \[\frac{\log m}{n}\ge R \pl\pl\pl\pl \text{and} \pl\pl\pl\pl \F_Q(\C, \N^{\ten n})\ge 1-\epsilon. \] Then quantum capacity $Q(\N)$ is defined as
\[Q(\N)=\lim_{\epsilon\to 0}\sup_n\{R\pl|\pl (n,R,\epsilon)\pl \text{achievable on}\pl \N \pl\text{for quantum communication}\}\pl.\]
Similarly, one can define the classical capacity $C(\N)$, private classical capacity $P(\N)$ and entanglement-assisted classical capacity $C_{EA}$. The classical capacity $C(\N)$ is the largest rate of classical bits that the channel $\N$ can reliably transmit from Alice to Bob. The private capacity $P(\N)$ is still for transmitting classical information, but which would be indiscernible to a hypothetical eavesdropper with complete access to the environment. The entanglement-assisted classical capacity $C_{EA}$ considers the improved rate with the assistance of (unlimited) pre-generated bipartite entanglement shared by the sender and receiver. We refer to \cite{Wildebook} for the formal definitions of $C, P$ and $C_{EA}$.

Thanks to the capacity theorems proved by Holevo \cite{Holevo1973,Holevo1996},  Schumacher and
Westmoreland \cite{SW}, Bennett et al \cite{BSST}, Lloyd \cite{Lloyd}, Shor \cite{Shor} and Devetak \cite{Devetak}, these operationally defined capacities are characterized by entropic expressions as follows,
\begin{align*}
&C(\N)\lel\lim_{k\to \infty} \frac{1}{k}\chi(\N^{\ten k})\pl, \pl\pl \pl\chi(\N)\lel\max_{\rho^{XA'} }I(X;B)_\omega\pl; \\
&Q(\N)\lel\lim_{k\to \infty} \frac{1}{k}Q^{(1)}(\N^{\ten k})\pl, \pl Q^{(1)}(\N)\lel\max_{\rho^{AA'} }I(A\ran B)_\omega\pl; \\
&P(\N)\lel\lim_{k\to \infty} \frac{1}{k}P^{(1)}(\N^{\ten k})\pl, \pl P^{(1)}(\N)\lel\max_{\rho^{XA'} }I(X;B)_\omega-I(X;E)_\omega\pl; \\
&C_{EA}(\N)\lel\max_{\rho^{AA'}}I(A;B)_\omega\pl,
\end{align*}
where the maximums in $C_{EA}$ and $Q^{(1)}$ run over bipartite input states $\rho^{AA'}$ and for $\chi$ and $P^{(1)}$ classical-quantum $\rho^{XA'}$. Here $\omega$ always denotes the output of $\rho$. In the four capacities above, only $C_{EA}$ admits a single-letter expression. The other three involve with the limits --the \emph{regularization} over many uses of the channel. Motivated by the super-additive phenomenon of the ``one-shot'' expressions $\chi,Q^{(1)}$ and $P^{(1)}$, Winter and Yang in \cite{WD} introduced the potential capacities $\chi^{(p)}, Q^{(p)},P^{(p)}$ as follows,
\begin{align*}\chi^{(p)}(\N)&=\sup_\M \chi(\N\ten\M)-\chi(\M) \pl, \pl Q^{(p)}(\N)=\sup_\M Q^{(1)}(\N\ten\M)-Q^{(1)}(\M)\pl, \\
P^{(p)}(\N)&=\sup_\M P^{(1)}(\N\ten\M)-P^{(1)}(\M)\pl,
\end{align*}
where the supremums runs over all channels $\M$. Note that here we use different notations from \cite{WD} to save the subscript ``$p$'' for $L_p$-norms and R{\'e}nyi-type expressions. The potential capacity is always an upper bound for corresponding capacity and hence the one-shot expression. A channel $\N$ is strongly additive for $\chi$ (resp. $Q^{(1)}$ and $P^{(1)}$) if $\chi(\N)=\chi^{(p)}(\N)$ (resp. $Q^{(1)}(\N)=Q^{(p)}(\N)$ and $P^{(1)}(\N)=P^{(p)}(\N)$). This means ${\chi(\N\ten \M)=\chi(\N)+\chi(\M)}$ (similar for $Q^{(1)}$ and $P^{(1)}$) for any $\M$ and hence $\chi(\N)=C(\N)$ (resp. $Q^{(1)}=Q$ and $P^{(1)}=P$).
\begin{prop}\label{convex}$\chi$, $Q^{(1)}$ and $P^{(1)}$ and their potential analogs are convex functions over channels.
\end{prop}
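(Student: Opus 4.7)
The plan is to use the standard \emph{flag trick}. For channels $\N_1,\N_2:\B(H_{A'})\to \B(H_B)$ and $\la\in[0,1]$, introduce a classical flag system $F$ and the extended channel
\[\tilde\N(\rho)=\la\,\N_1(\rho)\ten |1\rangle\langle 1|_F+(1-\la)\,\N_2(\rho)\ten |2\rangle\langle 2|_F\pl,\]
so that $\N_\la:=\la\N_1+(1-\la)\N_2=tr_F\circ\tilde\N$. Using a common environment $H_E$ for $\N_1,\N_2$ with isometric extensions $V_i:H_{A'}\to H_B\ten H_E$, form the Stinespring isometry
\[V\ket{\psi}=\sqrt{\la}\,V_1\ket{\psi}\ten\ket{1}_F\ten\ket{1}_{F'}+\sqrt{1-\la}\,V_2\ket{\psi}\ten\ket{2}_F\ten\ket{2}_{F'}\pl,\]
which duplicates the flag into both a visible register $F$ and an environmental register $F'$. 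A direct computation (off-diagonal flag coherences die under $tr_F$) yields $\tilde\N^c(\rho)=\la \N_1^E(\rho)\ten|1\rangle\langle 1|_{F'}+(1-\la)\N_2^E(\rho)\ten|2\rangle\langle 2|_{F'}$ together with the compatibility identity $tr_F\circ\N_\la^c=\tilde\N^c$, where $\N_\la^c=tr_B(V\cdot V^*)$.

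For $\chi$ and $Q^{(1)}$, fix an input $\rho^{XA'}$ (resp.\ $\rho^{AA'}$) and set $\omega_\la=(id\ten\N_\la)(\rho)$, $\tilde\omega=(id\ten\tilde\N)(\rho)$. Data processing for $tr_F$ gives
\[I(X;B)_{\omega_\la}\le I(X;BF)_{\tilde\omega}\pl,\pl I_c(A\ran B)_{\omega_\la}\le I_c(A\ran BF)_{\tilde\omega}\pl.\]
Because $F$ is classical in $\tilde\omega$, a short block-diagonal entropy computation splits both right-hand sides as $\la(\cdot)_{\omega_1}+(1-\la)(\cdot)_{\omega_2}$. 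Maximizing over the input then yields convexity of $\chi$ and $Q^{(1)}$.

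The $P^{(1)}=\max(I(X;B)-I(X;E))$ case is the delicate one, and is exactly why the flag is duplicated. The compatibility $tr_F\circ\N_\la^c=\tilde\N^c$ together with DPI gives $I(X;E_\la)_{\omega_\la}\ge I(X;F'E)_{\tilde\omega}$, so the environment-side inequality points the correct direction for the $-I(X;E)$ term. Combining with the output-side bound,
\[I(X;B)_{\omega_\la}-I(X;E_\la)_{\omega_\la}\le I(X;BF)_{\tilde\omega}-I(X;F'E)_{\tilde\omega}\pl,\]
and the right-hand side splits as $\la[I(X;B)_{\omega_1}-I(X;E)_{\omega_1^c}]+(1-\la)[I(X;B)_{\omega_2}-I(X;E)_{\omega_2^c}]$ by the same classical-flag calculation. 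Optimizing over classical-quantum inputs gives convexity of $P^{(1)}$.

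Convexity of the potential analogs is then formal: for any auxiliary channel $\M$, $\N_\la\ten\M=\la(\N_1\ten\M)+(1-\la)(\N_2\ten\M)$, so applying convexity of $\chi$ (resp.\ $Q^{(1)}$, $P^{(1)}$) to this combination, subtracting $\chi(\M)$ (resp.\ $Q^{(1)}(\M)$, $P^{(1)}(\M)$), and taking the supremum over $\M$ yields convexity of $\chi^{(p)}$, $Q^{(p)}$, $P^{(p)}$. The main obstacle is the $P^{(1)}$ step: one needs DPI to go the \emph{opposite} direction on the environment side from the output side, and the duplicated-flag Stinespring extension above is precisely what makes both applications simultaneously compatible.
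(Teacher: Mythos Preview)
Your proposal is correct and takes essentially the same approach as the paper: your flagged channel $\tilde\N$ is precisely the paper's ``heralded channel'' $\Phi_\la(\rho)=\la\N_1(\rho)\oplus(1-\la)\N_2(\rho)$ written in tensor rather than direct-sum notation, and the convex combination factoring through it via $tr_F$ is exactly the paper's partial-trace factorization. The only difference is presentational: for $P^{(1)}$ the paper simply observes that the complementary of a heralded channel is again heralded (hence post-processing monotonicity of the private information applies directly), whereas you make this explicit by building the duplicated-flag Stinespring isometry and checking $tr_F\circ\N_\la^c=\tilde\N^c$ by hand---same content, more detail.
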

\begin{proof}We provide a uniform argument using heralded channels. Given two channels $\N:\B(H_{A'})\to \B(H_{B_1})$ and $\M: \B(H_{A'})\to \B(H_{B_2})$ with common input space, let us define the heralded channel $\Phi_\la:\B(H_{A'}) \to \B(H_{B_1}\oplus H_{B_2})$ with a probability $\la\in [0,1]$,
\[\Phi_\la(\rho)=\la \p \N(\rho)\oplus (1-\la) \M(\rho):=\left[                \begin{array}{cc}
   \la \N(\rho) & 0\\
    0& (1-\la) \M(\rho) \\
\end{array}
\right]   \pl.   \]
The output signal is heralded because Bob knows which channel is used by measuring the corresponding block. Because of the block diagonal structure, it is not hard to see that
\begin{align*}&Q^{(1)}(\Phi_\la)=\max_{\rho^{AA'}} \pl \la I_c(A\ran B_1)+(1-\la)I_c(A\ran B_2)\pl, \\&\chi(\Phi_\la)=\max_{\rho^{XA'}} \pl \la I(X; B_1)+(1-\la)I(X; B_2)\pl.
\end{align*}
Note that the complementary channel of a heralded channel is again a heralded channel of complementary channels, i.e. $\Phi^E_\la (\rho)=\la\p \N^E(\rho)\oplus (1-\la)\M^E(\rho)$. Then a similar formula holds for one-shot private capacity $P^{(1)}$,
\[P^{(1)}(\Phi_\la)=\max_{\rho^{XA'}} \pl\la (I(X; B_1)-I(X;E_1))+(1-\la)(I(X; B_2)-I(X;E_2)) \pl.\]
Now if $\N$ and $\M$ have the same output space $H_{B_1}=H_{B_2}$, then the convex combination $\la\N +(1-\la)\M$ can be factorized through the heralded channel $\Phi_\la$ via a partial trace map. Therefore by data processing,
\begin{align*}Q^{(1)}(\la\N +(1-\la)\M)\le Q^{(1)}(\Phi_\la)&=\max_{\rho^{AA'}} \pl\la I_c(A\ran B_1)+(1-\la)I_c(A\ran B_2) \\&\le \la Q^{(1)}(\N)+(1-\la) Q^{(1)}(\M) \pl.\end{align*}
Here the $Q^{(1)}$ can be replaced by $\chi$ and $P^{(1)}$. Moreover, the convexity of potential capacities follow from the convexity of their ``one-shot'' expressions.
\end{proof}

We have seen that when the Stinespring space is TRO, the channel is a diagonal sum of partial traces. The capacity formulae of these channel follows from Proposition $1$ in \cite{wolf}.
\begin{prop}\label{pt} Let $\N=\oplus_i id_{n_k}\ten tr_{m_k}$ be a direct sum of partial traces. Then $\N$ is strongly additive for $\chi,Q^{(1)}$ and $P^{(1)}$, and moreover
\[Q^{(1)}(\N)=P^{(1)}(\N)=\log \big(\max_i n_i \big)\pl, \chi(\N)= \log ( \sum_{i}n_i)\pl, \pl C_{EA}(\N)=\log (\sum_{i} n_i^2)\pl .\]
\end{prop}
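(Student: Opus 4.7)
My plan follows the strategy of Proposition~1 in \cite{wolf} and splits into three stages. Throughout I would exploit two facts: each block $\N_i = id_{n_i}\ten tr_{m_i}$ is a degradable partial-trace channel with output dimension $n_i$, and both $\N = \oplus_i \N_i$ and its complement $\N^E = \oplus_i(tr_{n_i}\ten id_{m_i})$ are block-diagonal with the \emph{same} block index.

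For the single-block capacities, feeding a maximally entangled state on $\CC^{n_i}\ten\CC^{n_i}$ paired with any fixed pure state on $\CC^{m_i}$ achieves coherent information $\log n_i$, matching the output-dimension upper bound, so $Q^{(1)}(\N_i) = \log n_i$; degradability forces $P^{(1)}(\N_i) = Q^{(1)}(\N_i) = \log n_i$. Orthogonal classical encoding into $\CC^{n_i}$ gives $\chi(\N_i) = \log n_i$, and a maximally entangled input on the full input space yields $C_{EA}(\N_i) = 2\log n_i$.

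For the direct sum, I would decompose any pure input $\ket{\psi}^{AA'}$ along the block projections $\{P_i\}$ of $A'$, writing $\ket{\psi} = \sum_i\sqrt{p_i}\ket{\phi_i}$, so that $\omega := (id_A\ten\N)(\ketbra{\psi}) = \oplus_i p_i \omega_i$ with $\omega_i = (id_A\ten\N_i)(\ketbra{\phi_i})$, and similarly for the environment output. The partition entropy $H(p)$ then cancels in
\[
I_c(A\ran B)_\omega = H(\omega^B) - H(\omega) = \sum_i p_i I_c(A\ran B_i)_{\omega_i} \le \max_i Q^{(1)}(\N_i),
\]
with matching lower bound attained by concentrating on the block achieving the maximum; hence $Q^{(1)}(\N) = \log(\max_i n_i)$. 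For $P^{(1)}$, given a cq input $\rho^{XA'}$ I would refine it by measuring $\{P_i\}$ on each $\rho_x$ and broadcasting the block label into the classical register $X$; since this label is deterministically readable from either $B$ or $E$, the refinement leaves $I(X;B) - I(X;E)$ unchanged, after which a per-block optimization yields $P^{(1)}(\N) = \log(\max_i n_i)$. For $\chi$ and $C_{EA}$, the partition entropy adds constructively instead of canceling, and Lagrange optimization over $p_i$ (with $p_i \propto n_i$, respectively $p_i \propto n_i^2$) produces $\chi(\N) = \log\sum_i n_i$ and $C_{EA}(\N) = \log\sum_i n_i^2$.

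Strong additivity follows because $\N\ten\M = \oplus_i(\N_i\ten\M)$ inherits direct-sum structure, so the block decomposition reduces matters to the per-block identity $Q^{(1)}(\N_i\ten\M) = \log n_i + Q^{(1)}(\M)$ (and analogs for $\chi$ and $P^{(1)}$). The factor $tr_{m_i}$ can be trivialized by supplying a pure ancilla on $\CC^{m_i}$, reducing this to strong additivity of the noiseless channel $id_{n_i}\ten\M$, which is a standard chain-rule argument using $I(X;YZ)\le I(X;Z) + H(Y)$ together with its coherent- and private-information analogs. The main obstacle I anticipate is the refinement step for $P^{(1)}$: verifying that broadcasting the block label does not inflate the private rate relies on $\N$ and $\N^E$ having matching block structure, so that Bob and Eve learn the block label equally. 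This alignment is specific to direct sums of partial traces — not generic direct sums of degradable channels — and is precisely what makes the capacity formulas collapse to the clean max/sum forms stated.
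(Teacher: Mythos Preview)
The paper does not supply its own proof of this proposition; it simply states that the formulae ``follow from Proposition~1 in \cite{wolf}.'' Your outline is essentially a fleshed-out version of that Fukuda--Wolf direct-sum argument, so there is nothing to compare against: you are writing the proof the paper omits, along the lines the paper points to.

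One step deserves more care. In the strong-additivity argument you say the factor $tr_{m_i}$ ``can be trivialized by supplying a pure ancilla on $\CC^{m_i}$,'' reducing to strong additivity of $id_{n_i}\otimes\M$. The pure-ancilla trick only gives the \emph{lower} bound $Q^{(1)}(id_{n_i}\otimes tr_{m_i}\otimes\M)\ge \log n_i + Q^{(1)}(\M)$. For the matching upper bound you still need $Q^{(1)}(tr_{m_i}\otimes\M)\le Q^{(1)}(\M)$, i.e.\ that tensoring with a complete trace cannot raise the one-shot coherent information. This is true, but not via the ancilla: for a pure input $|\psi\rangle^{A(m_i)A'_\M}$ the output is $(id_A\otimes\M)(\sigma)$ for the \emph{mixed} state $\sigma^{AA'_\M}=tr_{(m_i)}|\psi\rangle\langle\psi|$, and one must check that mixed bipartite inputs cannot beat pure ones. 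Decomposing $\sigma=\sum_j p_j\phi_j$ into pure states, adjoining a classical flag $J$ on Bob's side, and invoking strong subadditivity gives
\[
I_c(A\ran B_\M)_{(id\otimes\M)(\sigma)} \le \sum_j p_j\, I_c(A\ran B_\M)_{(id\otimes\M)(\phi_j)} \le Q^{(1)}(\M)\pl,
\]
and the analogous inequalities for $\chi$ and $P^{(1)}$ follow the same way. With this gap closed your argument goes through.
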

The next theorem provides the comparison property for capacities.
\begin{cor} \label{comparison}Let $\N$ be a channel and $f$ be a symbol for $\N$. Then,
\begin{enumerate}
\item[i)]$C(\N)\le C(\N_f)\le C(\N)+\tau(f\log f)$;
\item[ii)] $Q(\N)\le Q(\N_f)\le Q(\N)+\tau(f\log f)$;
 \item[iii)]$P(\N)\le P(\N_f)\le P(\N)+\tau(f\log f)$;
 \item[iv)] $C_{EA}(\N)\le C_{EA}(\N_f)\le C_{EA}(\N)+\tau( f\log f)\pl.$
 \end{enumerate}
 For i),ii) and iii), the capacity can be replaced by corresponding one-shot expression and potential capacity.
\end{cor}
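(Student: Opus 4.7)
The plan is to split each of the four inequalities into a lower bound $X(\N)\le X(\N_f)$ and an upper bound $X(\N_f)\le X(\N)+\tau(f\log f)$. The lower bound is uniform across all four capacities: Proposition \ref{channel} identifies $\N$ with the postprocessing $\E_{\mathcal{L}(X)}\circ\N_f$, so any code achieving rate $R$ and error $\epsilon$ on $\N$ lifts to one of the same size and fidelity on $\N_f$ by precomposing the decoder with $\E_{\mathcal{L}(X)}$. This applies verbatim to the one-shot expressions and, after tensoring with an ancillary channel, to the potential capacities.

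The one-shot upper bounds for $\chi$, $Q^{(1)}$, and $C_{EA}$ are immediate from Corollary \ref{entropyin}: for each input state the pointwise inequalities for $I(X;B)$, $I_c(A\ran B)$, and $I(A;B)$ give the claim after maximizing. For $P^{(1)}$ I additionally control the complementary side. From the Stinespring isometry $V_f=(1_B\ten\sqrt{f})V$ one directly computes $\N_f^E(\rho)=\sqrt{f}\,\N^E(\rho)\,\sqrt{f}$. Since $\N^E$ ranges in $\mathcal{R}(X)$ and $f$ is strongly independent of $\mathcal{R}(X)$, the tensor-product structure afforded by strong independence combined with the defining property \eqref{proj} of $\E_{\mathcal{R}(X)}$ yields $\E_{\mathcal{R}(X)}\circ\N_f^E=\N^E$. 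Data processing then gives $I(X;E)_\omega\le I(X;E)_{\omega_f}$, which combined with $I(X;B)_{\omega_f}\le I(X;B)_\omega+\tau(f\log f)$ from Corollary \ref{entropyin} iii) yields $P^{(1)}(\N_f)\le P^{(1)}(\N)+\tau(f\log f)$ upon maximizing over classical-quantum inputs.

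Regularization passes from one-shot to full capacities via Proposition \ref{tensor}: $f^{\ten n}$ is a symbol of $\N^{\ten n}$ with $(\N^{\ten n})_{f^{\ten n}}=(\N_f)^{\ten n}$ and $\tau(f^{\ten n}\log f^{\ten n})=n\tau(f\log f)$, so applying the one-shot upper bound to $\N^{\ten n}$, dividing by $n$, and letting $n\to\infty$ yields the bounds on $C$, $Q$, and $P$. For the potential capacities, Proposition \ref{tensor} gives $(\N\ten\M)_{f\ten 1}=\N_f\ten\M$ with symbol $f\ten 1$ of the same normalized entropy $\tau(f\log f)$; subtracting the corresponding one-shot quantity of $\M$ from both sides of the inequality applied to $\N\ten\M$ and taking the supremum over $\M$ produces the claim for $\chi^{(p)}$, $Q^{(p)}$, and $P^{(p)}$.

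The main obstacle is the $P^{(1)}$ upper bound: it is the only case that requires controlling the complementary channel, and hence the identity $\E_{\mathcal{R}(X)}\circ\N_f^E=\N^E$. This is the one place beyond Corollary \ref{entropyin} where the strong independence of $f$ from $\mathcal{R}(X)$ must be invoked directly, and its verification parallels the $B$-side factorization $\E_{\mathcal{L}(X)}\circ\N_f=\N$ of Proposition \ref{channel} but has to be checked on the environment side from scratch.
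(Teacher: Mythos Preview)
Your treatment of the lower bounds, of the one-shot upper bounds for $\chi$, $Q^{(1)}$, $C_{EA}$, and of the passage to regularized and potential capacities via Proposition~\ref{tensor} is correct and matches the paper's argument.

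The gap is in your $P^{(1)}$ upper bound. The identity $\E_{\mathcal{R}(X)}\circ\N_f^E=\N^E$ that you claim is \emph{false} in general. Strong independence of $f$ from $\mathcal{R}(X)$ only says $\tau(f^n m)=\tau(f^n)\tau(m)$ for $m\in\mathcal{R}(X)$; it does \emph{not} give a tensor factorization or commutation between the algebra generated by $f$ and $\mathcal{R}(X)$. What $\E_{\mathcal{R}(X)}\circ\N_f^E=\N^E$ would require is $\tau(\sqrt{f}\,a\,\sqrt{f}\,b)=\tau(ab)$ for all $a,b\in\mathcal{R}(X)$, and this fails already in the qubit dephasing example (Example~\ref{dephasing}): with $f=I+\alpha S$ one computes that the diagonal part of $\sqrt{f}\,\ketbra{0}\,\sqrt{f}$ is $\diag\big(\tfrac{1+\sqrt{1-\alpha^2}}{2},\tfrac{1-\sqrt{1-\alpha^2}}{2}\big)\neq\ketbra{0}$ for $\alpha\neq 0$. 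More conceptually, $\N_f^E$ sends the orthogonal pure inputs $\ketbra{0},\ketbra{1}$ to the non-orthogonal pure outputs $\sqrt{f}\ketbra{0}\sqrt{f}$ and $\sqrt{f}\ketbra{1}\sqrt{f}$, so no channel whatsoever can post-process $\N_f^E$ back to $\N^E$; the desired inequality $I(X;E)_\omega\le I(X;E)_{\omega_f}$ is therefore not available by data processing.

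The paper circumvents the environment side entirely by rewriting the one-shot private information as a difference of coherent informations on the $B$ side,
\[
P^{(1)}(\N)=\max_{\rho^{XAA'}}\Big(I_c(A\ran B)_\omega-\sum_x p(x)\,I_c(A\ran B)_{\omega_x}\Big),
\]
with pure $\rho_x^{AA'}$ and $\omega$ the output of a purification of $\rho^A$. Then both the upper bound on the first term and the lower bound on each $I_c(A\ran B)_{\omega_{f,x}}$ come directly from Corollary~\ref{entropyin}~ii), and the $\tau(f\log f)$ appears exactly once. You should replace your complementary-channel argument with this reformulation.
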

\begin{proof} The inequalities for $\chi$, $Q^{(1)}$ and $C_{EA}$ follows from Corollary \ref{entropyin} by taking maximum over all possible inputs. Note that the ``one-shot'' private capacity can be rewritten as \[P^{(1)}(\N)\lel\max_{\rho^{XA'A} } \pl I_c(A\ran B)_\omega-\sum_{x}p(x)I_c(A\ran B)_{\omega_x}\pl,\]
where the maximum runs over all states \[\rho^{XA'A}=
\sum_x p(x)\ket{x}\bra{x}\ten \rho_x^{AA'}\]
and $\rho_x^{AA'}$ are pure states. The coherent information is for the output $\omega_x^{AB}=id_A\ten \N(\rho_x^{AA'})$ and $\omega^{AB}=id_A\ten \N(\tilde{\rho}^{AA'})$ where $\tilde{\rho}^{AA'}$ is any purification of $\rho^A$ (so $\tilde{\rho}^{AA'}$ may not be the reduced density of $\rho^{XA'A}$). Applying Corollary \ref{entropyin} ii) one have
\begin{align*}I_c(A\ran B)_{\omega_f}-\sum_x p(x)I_c(A\ran B)_{\omega_{f,x}}\le I_c(A\ran B)_{\omega}-\sum_x p(x)I_c(A\ran B)_{\omega_{x}}+\tau(f\log f)\pl .
\end{align*}
Then the upper bound of $P^{(1)}(\N_f)$ follows
and the lower bound is a consequence of the lifting property $\E_{\mathcal{L}(X)}\circ\N_f=\N$.
For the regularization, note that by Lemma \ref{tensor}, $f^{\ten k}$ is a symbol of $\N^{\ten k}$. Therefore  we have
\begin{align*}P(&\N_f)=\lim_{k\to\infty}\frac1k P^{(1)}\Big((\N_f)^{\ten k}\Big)=\lim_{k\to\infty}\frac1k P^{(1)}(\N^{\ten k}_{f^{\ten k}})\le\lim_{k\to\infty}\frac1k[P^{(1)}(\N^{\ten k})+\tau^k(f^{\ten k}\log f^{\ten k})]\\=&\lim_{k\to\infty}\frac1k(P^{(1)}(\N^{\ten k})+k\tau(f\log f))=\lim_{k\to\infty}\frac1k P^{(1)}(\N^{\ten k})+\tau(f\log f)=P(\N)+\tau(f\log f)\pl.
\end{align*}
Similarly, for the potential capacities, we use that $\M\ten \N_f=(\M\ten \N)_{1\ten f}$ for an arbitrary channel $\M$ and $\tau((1\ten f) \log 1(\ten f))=\tau(f\log f)$. The arguments for classical capacity and quantum capacity are the same.
\qd
The gap of upper and lower estimates are bounded uniformly by the term $\tau(f \log f)$. This can be viewed as a ``first order'' approximation of the capacity of $\N_f$ by the entropic term $\tau(f \log f)=\log |E|-H(\frac{1}{|E|}f)$.

The next theorem is a formula of the negative $cb$-entropy. The negative $cb$-entropy $-S_{cb}(\N)$ of a channel $\N: \B(H_{A'})\to \B(H_{B})$ is defined as
\begin{align*}
-S_{cb}(\N)=\sup_{\rho}{H(A)_\omega-H(AB)_\omega}\pl,
\end{align*}
where $\omega^{AB}=id_A\ten \N(\rho^{AA'})$ and the supremum runs over all pure bipartite states $\rho^{AA'}$. It was characterized in \cite{DJKR} as the derivative at $p=1$ of the completely bounded norm from trace class to Schatten $p$ class,
\begin{equation}\label{djkr}
-S_{cb}(\M) \lel \frac{d}{dp}|_{p=1}{\norm{\M:S_1(H_{A'})\rra S_p(H_B)}{cb}} \pl ,
\end{equation}
 and later rediscovered in \cite{invcoh} as
``reverse coherent information'' with an operational meaning. Recall that $|A|:=dim H_A$ denotes the dimension of a Hilbert space.
\begin{theorem}\label{negativecb}
Let $\N$ be a quantum channel and $f$ be a symbol of $\N$. Suppose that the complimentary channel $\N^E:\B(H_{A'}) \to \B(H_E)$ is unital up to a scalar, i.e. ${\N^E(1_{A'})=\frac{|A'|}{|E|}1_E}$. Then
\[-S_{cb}(\N_f)=\log\frac{|A'|}{|E|}+\tau(f\log f)\pl.\]
\end{theorem}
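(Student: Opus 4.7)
My plan is to prove matching lower and upper bounds on the variational formula
\[
-S_{cb}(\N_f) \pl=\pl \sup_{\rho^{AA'}\text{ pure}} H(A)_\omega - H(AB)_\omega, \qquad \omega = id_A\ten \N_f(\rho^{AA'}),
\]
which is the definition used in the paper. The common tool for both bounds is the identity
\[
\N_f^E(\rho) \pl=\pl \sqrt{f}\,\N^E(\rho)\,\sqrt{f}\qquad (\rho \in \B(H_{A'})),
\]
which follows immediately from $V_f\ket{x}=\ket{x\sqrt{f}}$ together with the partial trace formula $tr_B(\ket{h}\bra{k})=k^*h$ of \eqref{ptrace}.

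For the lower bound I will evaluate at the maximally entangled input $\phi^+$ on $A\ten A'$ with $|A|=|A'|$. Since its $A$-marginal is maximally mixed, $H(A)_\omega=\log|A'|$. Purifying $\omega$ through $V_f$ to the pure tripartite state $\xi^{ABE}=(id_A\ten V_f)\ketbra{\phi^+}(id_A\ten V_f^*)$ gives $H(AB)_\omega = H(\xi^E)$, and a direct computation using the identity above together with the unitality hypothesis yields
\[
\xi^E \pl=\pl \frac{1}{|A'|}\N_f^E(1_{A'}) \pl=\pl \frac{1}{|A'|}\sqrt{f}\,\N^E(1_{A'})\,\sqrt{f} \pl=\pl \frac{f}{|E|}.
\]
Since $H(f/|E|)=\log|E|-\tau(f\log f)$, the variational formula produces $-S_{cb}(\N_f)\ge \log(|A'|/|E|)+\tau(f\log f)$.

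For the matching upper bound, take an arbitrary pure input $\rho^{AA'}$. Then $H(A)_\omega=H(\rho^{A'})$ (since $\N_f$ acts trivially on $A$ and $\rho$ is pure), and purifying through $V_f$ again gives $H(AB)_\omega=H(\sigma)$ with $\sigma:=\N_f^E(\rho^{A'})=\sqrt{f}\,\N^E(\rho^{A'})\,\sqrt{f}$. Applying the data processing inequality to $\N_f^E$ with reference state $1_{A'}/|A'|$, noting that $\N_f^E(1_{A'}/|A'|)=f/|E|$ by the same computation, yields $D(\sigma\|f/|E|)\le D(\rho^{A'}\|1_{A'}/|A'|)$; expanding both sides and rearranging gives
\[
H(\rho^{A'})-H(\sigma) \pl\le\pl \log(|A'|/|E|)+tr(\sigma\log f).
\]
To close the gap I will prove the exact identity $tr(\sigma\log f)=\tau(f\log f)$. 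Because $\sqrt{f}$ commutes with $\log f$, $tr(\sigma\log f)=tr(\N^E(\rho^{A'})\cdot f\log f)$, and $\N^E(\rho^{A'})\in \R(X)$ for the smallest TRO $X$ containing the Stinespring space $Y$ of $\N$, because $\N^E(\ket{x}\bra{y})=y^*x\in \R(X)$ for $x,y\in Y\subset X$. The symbol hypothesis supplies a $C^*$-subalgebra $N\ni f$ with $\tau(ab)=\tau(a)\tau(b)$ for all $a\in \R(X), b\in N$; applied to $a=\N^E(\rho^{A'})$ and $b=f\log f$, together with $\tau(\N^E(\rho^{A'}))=tr(\rho^{A'})/|E|=1/|E|$, this yields $tr(\sigma\log f)=|E|\cdot\tau(\N^E(\rho^{A'}))\tau(f\log f)=\tau(f\log f)$, exactly matching the lower bound. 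The main subtlety of the entire argument is this last independence computation, which crucially requires strong independence (the full $C^*$-subalgebra $N$) rather than merely first-order independence.
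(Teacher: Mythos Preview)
Your proof is correct and takes a genuinely different route from the paper's argument. The paper proves the stronger identity
\[
\|J_{\N_f}\|_{S_\infty(A,S_p(B))} \lel \Big(\frac{|A'|}{|E|}\Big)^{1-1/p}\|f\|_{p,\tau}
\]
for all $1\le p\le\infty$, via a $*$-homomorphism $\pi(f)=\frac{|E|}{|A'|}W(e_{11}\ten f)W^*$ and Stein's interpolation theorem, then differentiates at $p=1$ using the Devetak--Junge--King--Ruskai characterization $-S_{cb}(\M)=\frac{d}{dp}\big|_{p=1}\|\M:S_1\to S_p\|_{cb}$. You instead work directly with the entropic variational formula: the lower bound comes from evaluating at the maximally entangled input (both approaches ultimately identify this as the optimizer), and your upper bound replaces the interpolation step by a single application of data processing for $\N_f^E$ together with the independence identity $tr(\N^E(\rho^{A'})\cdot f\log f)=\tau(f\log f)$. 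Your argument is more elementary---it avoids the vector-valued $L_p$ machinery and interpolation theory entirely---whereas the paper's approach yields the full $S_\infty(A,S_p(B))$-norm of the Choi matrix, information that is strictly stronger than the $-S_{cb}$ value and potentially useful for R\'enyi-type estimates at finite $p$.
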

\begin{proof} Let $H_A\cong {H_{A'}}$ and denote $e_{ij}$ the matrix units in $\B(H_A)\cong M_m$ for $m=|A'|=|A|$. Let $\{\ket{h_i}\}$ be an orthonormal basis of ${H_{A'}}$. For a channel $\M: \B(H_{A'})\to \B(H_B)$, its Choi matrix $J_\M$ is given by
\[J_\M=\sum_{i,j}e_{ij}\ten \M(\ket{h_i}\bra{h_j}) \in \B(H_A\ten H_B).\]
The completely bounded $1\to p$ norm of a map $\M$ is same with the vector-valued $(\infty,p)$ norm (defined in \eqref{PQQ}) of its Choi matrix $J_\M$ (see e.g. \cite{Psbook,ER}),
\[\norm{\M:S_1(H_{A'})\rra S_p(H_B)}{cb}=\norm{J_\M}{S_\infty(A,S_p(B))}\pl.\]
In particular, for $p=\infty$, $S_\infty(A,S_\infty(B))=\B(H_A\ten H_B)$.
The Choi matrix of $\N_f$ is given by
\[\sum_{ij}e_{ij}\ten \N_f(\ket{h_i}\bra{h_j})=\sum_{i,j}e_{i,j}\ten (h_ifh_j^*)=W(e_{11}\ten f)W^*\in  \B(H_A\ten H_B)\pl,\]
where $h_i$ are operators in $\B(H_E,H_B)$ corresponding to $\ket{h_i}$ and $W=\sum_i e_{i1}\ten h_i$. Since $\N^E$ is unital up to a factor,
\[\N^E(\sum_i \ket{h_i}\bra{h_i})=\sum_i h_i^*h_i=\frac{|A'|}{|E|}1_E\pl.\]
This implies that $(\frac{|E|}{|A'|})^{\frac{1}{2}}W$ is an isometry. We then define the following $*$-homomorphism
\begin{align}\pi: \B(H_E) \to \B(H_A\ten H_B) \pl, \pl \pi(f):=\frac{|E|}{|A'|}J_{\N_f}=\frac{|E|}{|A'|}W(e_{11}\ten f)W^*\pl.\label{pinfty}\end{align}
Note that $tr(f)=tr(\pi(f))$, then
\begin{align*}  \|f\|_{p,\tau}^p \lel |E|^{-1}tr_E(|f|^p) \lel
|E|^{-1}tr_{AB}(\pi(|f|^p)) \lel |E|^{-1}\|\pi(|f|)\|_{p}^p \pl .\end{align*}
Therefore we get
 \begin{align*}
   &\|f\|_{p,\tau} \lel |E|^{-1/p}\|\pi(f)\|_p
   \lel |E|^{-1/p}\frac{|E|}{|A'|}\|W(e_{11}\ten f)W^*\|_{p}=
 |E|^{1-1/p}|A|^{-1}
    \|J_{\N_f}\|_{p}\pl.
   \end{align*}
By the definition \eqref{QPP}, we obtain a lower bound for the $S_\infty({A'},S_p(B))$ norm,
 \begin{equation} \label{lower}
  \|J_{\N_f}\|_{S_\infty({A'},S_p(B))}\gl |A'|^{-1/p}
 \|J_{\N_f}\|_{p}
 \lel \frac{|A'|}{|E|}^{1-1/p} \|f\|_{p,\tau}\pl .
 \end{equation}
For the upper bound, note that $\pi$ is a $^*$-homomorphism, then \begin{align}\norm{\pi(f)}{\infty}=\norm{\frac{|E|}{|A'|}J_{\N_f}}{\infty}=\frac{|E|}{|A'|}\norm{\N_f:S_1(H_{A'})\to \B(H_B)}{cb}\pl.
\end{align}
Now assume that $X$ is a TRO containing $\N$'s Stinespring space and $f$ is strongly independent $\mathcal{R}(X)$. Let $M\subset \B(H_E)$ be the $C^*$-subalgebra generated by $f$. Then for any operator $g\in M$, the map $\N_g(\ket{h}\bra{k})=hgk^*$ satisfies that
\[tr(\N_g(\rho))=\tau(g)tr(\N(\rho))\pl.\]
Thus $\norm{\N_g:S_1(H_{A'}) \to S_1(H_B)}{cb}=|\tau(g)|\le \norm{g}{1,\tau}$. we have \[\|\pi:L_1(M,\tau)\to S_\infty (A, S_1(B))\|\le \frac{|E|}{|A'|}\pl.\]
Note that for $L_\infty$ spaces, \[\|\pi:M\to \B(H_A\ten H_B)\|\le 1\pl,\] because $\pi$ is a $*$-homomorphism. Then by Stein's interpolation theorem (Theorem \ref{interpolation}),
  \begin{align}\label{upth} \|\pi:L_p(M,\tau)\to S_\infty (A, S_p(B))\|\kl (\frac{|E|}{|A'|})^{1/p} \pl ,\end{align}
Combining \eqref{upth} with \eqref{lower}, the upper and lower bounds coincide and give
 \begin{align*}\|J_{\N_f}\|_{S_\infty (A, S_p(B))} \lel  (\frac{|A'|}{|E|})^{1-1/p}\|f\|_{p,\tau} \pl .\end{align*}
The assertion follows by differentiating the above equality at $p=1$. Note that for all $1\le p\le \infty$, the maximal entangled state $\sum_{i}e_{ij}\ten\ket{h_i}\bra{h_j}$ is a norm attaining element.
\end{proof}
\subsection{The capacity regions}
\noindent
The capacity regions of a quantum channel consider the trade offs between different resources in quantum information theory. The notion of a capacity region relies on the availability of quantum protocols, such as teleportation and dense coding, that exchange one type of resource for another. Based on research due to Devetak and Shor \cite{DS}, Abeyesinghe et al \cite{mother}, Collins and Popescu \cite{CP} and many others, Hsieh and Wilde introduced the two kinds of capacity regions: the quantum dynamic region $C_{CQE}$ and private dynamic region $C_{RPS}$. The quantum dynamic region $C_{CQE}$ considers a combined version of classical communication ``$C$'', quantum communication ``$Q$'' and entanglement generation ``$E$'', while the private dynamic region $C_{RPS}$, with the idea of the Collins-Popescu analogy \cite{CP},  unifies the public classical communication ``$R$'', private classical communication ``$P$'' and secret key distribution ``$S$''. We refer to their papers \cite{HW1,HW2} for the operational definitions of $C_{CQE}$ and $C_{RPS}$. Here we state the capacity region theorems from \cite{HW1,HW2}.

Let $\N: \B(H_{A'})\to \B(H_B)$ be a quantum channel and ${V:H_{A'}\to H_B\ten H_E}$ be its Stinespring isometry. The quantum dynamic region $C_{CQE}(\N)$ is characterized as follows,
 \[
C_{CQE}(\N)=\overline{\bigcup^{\infty}_{k=1}\frac{1}{k}C_{CQE}^{(1)}(\N^{\otimes k})}\ , \ \ \ \ \ C_{CQE}^{(1)}\equiv \bigcup_\omega C^{(1)}_{CQE,\omega}\pl,
\]
where the overbar represents the closure of a set. The ``one-shot" region $C_{CQE}^{(1)}\subset \mathbb{R}^3$ is the union of the ``one-shot, one-state" regions $C_{CQE,\omega}^{(1)}$, which are the sets of all rate triples $(C,Q,E)$ such that:
\begin{align*}
C+2Q\le I(AX;B)_\omega\pl, \ Q+E\le I(A\rangle BX)_\omega\pl , \  C+Q+E&\le I(X;B)_\omega+I(A\rangle BX)_\omega\pl.
\end{align*}
The above entropy quantities are with respect to a classical-quantum state
\[\omega^{XABE}\lel\sum_x p_X(x)\ketbra{x}^X\otimes(1_A \ten V)\rho^{AA'}_x(1_A\ten V^*)\]
and the states $\rho^{AA'}_x$ are pure. Similarly, the private dynamic region is given by,
\[
C_{RPS}(\N)=\overline{\bigcup^{\infty}_{k=1}\frac{1}{k}C_{RPS}^{(1)}(\N^{\otimes})}\pl, \pl C_{RPS}^{(1)}\equiv \bigcup_\omega C^{(1)}_{RPS,\omega} \pl.\]
The ``one-shot, one-state" region $C^{(1)}_{RPS}(\N)\subset \mathbb{R}^3$ is the set of all triples $(R,P,S)$ such that
\begin{align*}
R&+P\le I(YX;B)_\omega\pl ,
\pl P+S\le I(Y;B|X)_\omega-I(Y;E|X)_\omega\pl ,\\
R&+P+S\le I(YX;B)_\omega-I(Y;E|X)_\omega\pl.
\end{align*}
The above entropic quantities are with respect to a classical-quantum state $\omega^{XYBE}$ where
\[\omega^{XYBE}\equiv\sum_x p_{X,Y}(x,y)\ketbra{x}^X\otimes \ketbra{y}^Y \otimes (V\rho^{A'}_{x,y}V^*)\pl.\]
\begin{exam}{\rm Let $\N$ be a channel and its Stinespring space be a TRO $X\cong \oplus_i M_{n_i,m_i}$. We know from Proposition \ref{pt} that
\[\N=\oplus_i id_{n_i}\ten tr_{m_i}\] as a direct sum of partial traces. The capacity regions of this class of channels are accessible. The quantum dynamic region regularizes
$C_{CQE}(\N)=C^{(1)}_{CQE}(\N)$, and it is characterized as a union of the following regions
\begin{align*}
C+2Q&\le H(\{p_{\la,\mu}(i)\})+2\sum_i p_{\la,\mu}(i) \log n_i \pl,
\pl Q+E\le \sum_i p_{\la,\mu}(i) \log n_i\pl ,\\
C+Q+E&\le H(\{p_{\la,\mu}(i)\})+\sum_i p_{\la,\mu}(i)\log n_i\pl.
\end{align*}
for all $\la,\mu\ge0$. Here $\{p_{\la,\mu}(i)\}$ is the probability distribution given by
\[p_{\la,\mu}(i)=n_i^{\frac{2+\la+\mu}{1+\mu}}/(\sum_i n_i^{\frac{2+\la+\mu}{1+\mu}})\pl.\]
Similarly, for the public-private dynamic region, $C_{RPS}(\N)=C_{RPS}^{(1)}(\N)$ is the union of
\begin{align*}
&R+P\le H(\{q_{\la,\mu}(i)\})+\sum_i q_{\la,\mu}(i) \log n_i \pl,
P+S\le \sum_i q_{\la,\mu}(i) \log n_i\pl ,\\
&R+P+S\le H(\{q_{\la,\mu}(i)\})+\sum_i q_{\la,\mu}(i)\pl.
\end{align*}
for all $\la,\mu\ge0$. Here $\{q_{\la,\mu}(i)\}$ is the probability distribution given by
\[q_{\la,\mu}(i)=n_i^{\frac{1+\la+\mu}{1+\mu}}/(\sum_i n_i^{\frac{1+\la+\mu}{1+\mu}})\pl.\]
}
\end{exam}

In general it is difficult to completely characterize the capacity regions. Let us consider two cones,
 \[ W_1 \lel \{(C,Q,E)|2Q+C\le 0,\ Q+E\le 0,\ Q+E+C\le 0\}\]
 and \[W_2=\{(R,P,S)|\ R+P\le 0, P+S \le 0, R+P+S\le 0\}\pl.\]
 The first one is the resource trading off via teleportation, superdense coding and entanglement distribution and the second
 is the cone obtained from secret key distribution, the one-time pad and private-to-public transmission (see \cite{HW1,HW2}). We have a comparison property of the rate triple $(I(X;B)_{\omega},\frac12I(A;B|X)_{\omega},-\frac{1}{2}I(A;E|X)_{\omega} )$ for each single input state $\rho^{XA'}$ and respectively $(I(X,B)_{\omega},I(Y;B|X)_{\omega},-I(Y;E|X)_{\omega} )$ for each $\rho^{XYA'}$.

\begin{cor}Let $\N$ be a channel and $f$ be a symbol for $\N$. Denote the quantity $\tau :=\tau(f\log f)$. Then
\begin{enumerate} \item[i)] $C_{CQE}(\N)\subset C_{CQE}(\N_f) \subset C_{CQE}(\N)+(\tau,\frac{\tau}{2},\frac{\tau}{2})\pl; $ \item[ii)]$C_{RPS}(\N)\subset C_{RPS}(\N_f) \subset C_{RPS}(\N)+(\tau,\tau,\tau)\pl .$
\end{enumerate}
\end{cor}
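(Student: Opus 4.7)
The plan is to reduce each regional inclusion to a componentwise analysis of the \emph{corner point} of the one-shot one-state region, exploiting that the cones $W_1, W_2$ are convex, closed under addition, and contain the negative octant. For a fixed input state, $C^{(1)}_{CQE,\omega}$ (resp.\ $C^{(1)}_{RPS,\omega}$) is the translate of $W_1$ (resp.\ $W_2$) by the corner point $(I(X;B), \tfrac12 I(A;B|X), -\tfrac12 I(A;E|X))$ (resp.\ $(I(X;B), I(Y;B|X), -I(Y;E|X))$), at which all three defining inequalities are simultaneously tight. Writing $\delta = \mathrm{corner}(\N_f) - \mathrm{corner}(\N) = (\delta_1, \delta_2, \delta_3)$, the forward inclusion $C^{(1)}_{\cdot,\omega}(\N_f) \subset C^{(1)}_{\cdot,\omega}(\N) + v$ reduces to $\delta - v \in W_i$ and the reverse to $-\delta \in W_i$. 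Unioning over $\omega$ gives the one-shot regional inclusion; regularization then uses Proposition~\ref{tensor} together with $\tau(f^{\otimes k}\log f^{\otimes k}) = k\tau$, divided by $k$, unioned, and closed.

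For CQE with $v = (\tau, \tau/2, \tau/2)$: Corollary~\ref{entropyin}~iii) applied to the classical-quantum input $\rho^{XA'}$ gives $\delta_1 \in [0, \tau]$. Averaging the per-$x$ mutual-information bound over the pure inputs $\rho_x^{AA'}$ gives $\delta_2 \in [0, \tau/2]$. Since the $\rho_x^{AA'}$ are pure by definition of $C^{(1)}_{CQE,\omega}$, the outputs $\omega_x^{ABE}$ and $\omega_{f,x}^{ABE}$ are pure, and the pure-state identity $I(A;E) + I(A;B) = 2H(A)$ rewrites $-\tfrac12 I(A;E|X) = \tfrac12 I(A;B|X) - H(A|X)$; as $H(A|X)$ depends only on the input, $\delta_3 = \delta_2 \in [0, \tau/2]$. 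Then $\delta - (\tau, \tau/2, \tau/2)$ has nonpositive entries, hence lies in $W_1$, and $-\delta$ (also componentwise nonpositive) lies in $W_1$, giving both inclusions.

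For RPS with $v = (\tau, \tau, \tau)$: again $\delta_1 \in [0, \tau]$ and $\delta_2 \in [0, \tau]$ by the same arguments. The individual $\delta_3$ cannot be bounded through $f$ (see the obstacle below), so we bound the combined $\delta_2 + \delta_3 = [I(Y;B|X) - I(Y;E|X)]_{\omega_f} - [I(Y;B|X) - I(Y;E|X)]_\omega$. By the standard reduction, WLOG $\rho_{x,y}^{A'}$ is pure (absorbing any mixedness into the classical variable $Y$); then $\omega_{x,y}^{BE}$ and $\omega_{f,x,y}^{BE}$ are pure, so $H(B)_{\omega_{x,y}} = H(E)_{\omega_{x,y}}$, and one computes $[I(Y;B) - I(Y;E)]_{\omega_x} = H(B)_{\omega_x} - H(E)_{\omega_x} = I_c(A\rangle B)_{\Phi_x}$ where $\Phi_x$ is the $\N$-image of any purification of $\omega_x^{A'}$. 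Corollary~\ref{entropyin}~ii) then yields $\delta_2 + \delta_3 \in [0, \tau]$ after averaging. The bounds $\delta_1, \delta_2 \le \tau$ and $\delta_2 + \delta_3 \le \tau$ verify $\delta - (\tau, \tau, \tau) \in W_2$ (e.g., $(\delta_2 - \tau) + (\delta_3 - \tau) = \delta_2 + \delta_3 - 2\tau \le -\tau$), and $\delta_1, \delta_2, \delta_2 + \delta_3 \ge 0$ verify $-\delta \in W_2$.

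The main obstacle is precisely the RPS third coordinate: because the environment is the output side of the complementary channel and $f$ is not a symbol of $\N^E$, the quantity $I(Y;E|X)$ cannot be compared directly by our local inequality. The detour through the pure-state identity, which collapses the combined private quantity $I(Y;B)-I(Y;E)$ to a coherent information amenable to Corollary~\ref{entropyin}~ii), is essential, as is the standard reduction to pure $\rho_{x,y}^{A'}$.
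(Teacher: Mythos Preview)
Your proof is correct and follows the same overall strategy as the paper: reduce to the corner point of each one-shot one-state region, bound the coordinate shifts via Corollary~\ref{entropyin}, check that the residual lies in the cone $W_i$ (using $W_i+W_i=W_i$), and regularize via Proposition~\ref{tensor}. The only real difference is in the RPS third coordinate. The paper bounds $\delta_3\in[0,\tau]$ \emph{individually}: with pure $\rho_{x,y}^{A'}$ one has $H(E|XY)=H(B|XY)$, and for each $x$ the entropy $H(\omega_{f,x}^E)$ equals $H(AB)$ for a purification of $\rho_x^{A'}$, so Corollary~\ref{entropyin}~i) controls both terms of $I(Y;E|X)$ directly. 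Your assertion that $I(Y;E|X)$ ``cannot be compared directly by our local inequality'' is thus too strong---the same purification device you use to rewrite $\delta_2+\delta_3$ as a coherent-information increment also yields a direct bound on $\delta_3$ alone. That said, your joint estimate $\delta_2+\delta_3\in[0,\tau]$ is actually tighter than what the paper's separate bounds give ($\delta_2+\delta_3\le 2\tau$); either suffices to verify membership in $W_2$.
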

\begin{proof}
The argument for the two kinds of regions are similar. Here we give the proof for the private dynamic region $C_{RPS}$ and the argument proof for quantum dynamic region $C_{CQE}$ is similar. Let us assume
\[\omega_f^{XYABE}=\sum_x p_{X,Y}(x,y)\ketbra{x}^X\otimes \ketbra{y}^Y \ten (1\ten V_f)(\rho^{A'A}_{x,y})(1\ten V_f^*)\pl ,\]
where $\rho^{A'A}_{x,y}$ are pure states. We denote $(R^f, P^f, S^f)$ for the rate triple \[(I(X;B)_{{\omega_f}},I(Y;B|X)_{{\omega_f}},-I(Y;E|X)_{{\omega_f}} ) \pl.\]
By the entropic inequality \eqref{entropyin},
\begin{align*} I(X;B)_{\omega_f}\le\tau(f\log f)+ I(X;B)_{\omega_1}  \pl .
\end{align*}
From this, we may assume
$R^f=R^1+\tau(f\log f)-\al_1$ for some $\al_1\gl 0 $. Similarly, we have
 \begin{align*} I(Y;B|X)_{\omega_f} &\lel H(Y|X)_{\omega_f}+ H(B|X)_{\omega_f}-H(YB|X)_{\omega_f}\\&=H(Y|X)_{\omega_f}+\sum_x p(x) H(\omega^B_{f,x})
 - \sum_x p(x)(H(Y|X=x)+\sum_y p(y|x)H(\omega^{B}_{x,y,f}))\\ & \le I(Y;B|X)_{\omega}+\tau(f\log f)\pl ,
 \end{align*}
and
 \begin{align*} &I(Y;E|X)_{\omega_f}\lel H(Y|X)_{\omega_f}+ H(E|X)_{\si,f}-H(YE|X)_{\omega_f}\\&=H(Y|X)_{\omega_f}+\sum_x p(x) H(\omega^E_{f,x})
 - \sum_x p(x)(H(Y|X=x)+\sum_y p(y|x)H(\omega^{E}_{x,y,f}))\\&
  \ge I(Y;E|X)_{\omega}-\tau(f\log f)\pl .\end{align*}
This means
 \[ P^f \lel P^1+ \tau(f\log f)-\al_2 \pl ,\pl S^f \lel S^1+ \tau(f\log f)-\al_3 \pl \]
for some $\al_2,\al_3\gl 0$. Now it is obvious that $(-\al_1,-\al_2, -\al_3)\in W$, then we have
 \[ (R^f,P^f,S^f) \in (\tau,\tau,\tau)+ (R^1,P^1,S^1)+W_2 \pl .\]
Taking the union for all $\omega$, we have
 \[ C^{(1)}_{RPS}(\N_f)\in (\tau,\tau,\tau) + C^{(1)}_{RPS}(\N)
  + W_2+W_2 \pl .\]
For the cone $W_2$, we have $W_2+W_2=W_2$ and this concludes that \[C_{RPS}^{(1)}(\N_f)\subset (\tau,\tau,\tau)+C_{RPS}^{(1)}(\N)\pl. \]
For regularization, we apply the above estimates to the tensor product channel
\begin{align*}\frac{1}{k}C^{(1)}_{RPS}(\N_f^{\ten k})&=\frac{1}{k}C^{(1)}_{RPS}\Big((\N^{\ten k})_{\pl f^{\ten k}}\Big)\subset \frac{1}{k}\Big(k(\tau,\tau,\tau)+C_{RPS}^{(1)}(\N^{\otimes k})\Big)\\&=(\tau,\tau,\tau)+\frac{1}{k}C_{RPS}^{(1)}(\N^{\otimes k})\pl,\end{align*}
which completes the proof.
\end{proof}
\subsection{Strong converse rates}A ``strong converse'' means there is a sharp drop off for code fidelity above the optimal transmission rate. More generally, we will investigate rates above which the transmission only succeeds with arbitrarily small probability. We say $r$ is a strong converse rate for quantum communication if for every sequence of achievable triple $(n,R_n,\e_n)$ of quantum communication, we have
\[\liminf_{n\to\infty}R_n>r\pl \Rightarrow \pl \lim_{n \to\infty}\e_n=1\pl.\]
The strong converse rate of classical communication and private classical communication can be defined similarly. We refer to \cite{cconverse,pconverse} for formal definitions of these two because they not used directly in this paper. The strong converse classical capacity $C^\dag$, the strong converse quantum capacity $Q^\dag$ and the strong converse private capacity $P^\dag$ are defined as the infimum of corresponding strong converse rates. We say a channel $\N$ has strong converse if the capacity equals to the strong converse capacity (respectively, $C^{\dag}(\N)=C(\N)$, ($Q^{\dag}(\N)=Q(\N)$, $P^{\dag}(\N)=P(\N)$).

There are known upper bounds for strong converse capacities. It is shown by Wilde et al \cite{cconverse} that for any channel $\N$,
\begin{align}\label{alholevo}C^\dag(\N)\le \lim_{k\to \infty}\frac{\chi_p(\N^{\ten k})}{k}\pl , \pl \chi_p(\N)=\max_{\rho^{XA'}}I_p(X;B)_\omega \pl ,\end{align}
where the sandwich R{\'e}nyi mutual information is given by
\begin{align} I_p(A;B)_\rho=\inf_{\si^B} D_p(\rho^{AB}||\rho^A\ten \si^B)\pl.
\end{align}
For the quantum strong converse, the Rains information of a quantum channel is shown to be a strong converse rate \cite{qconverse}. The relative entropy of entanglement $E_R(\N)$ is an upper bound for the private capacity \cite{PLOB15} and the strong converse capacity \cite{pconverse},
\begin{align}\label{Pp}P^\dag(\N)\le  E_R(\N)\pl, \pl E_R(\N)=\max_{\rho^{AA'}}E_R(id_A\ten \N_f(\rho))\pl.\end{align}
The relative entropy of entanglement $E_R(\rho)$ for a bipartite $\rho^{AB}$ is
\[E_R(\rho^{AB})=\inf_{\si^{AB} \in S(A:B)} D(\rho^{AB}||\si^{AB}) \pl,\]
where $S(A$:$B)$ stands for the separable states between $A$ and $B$.
These results in particular imply the strong converses of Hadamard channels and entanglement-breaking channels for classical, quantum and private communication. In their arguments, the sandwich R{\'e}nyi relative entropy plays an important role.

For $1<p\le\infty$ and $1/p+1/p'=1$,
we consider the R{\'e}nyi coherent information of a channel for as an analog of \eqref{alholevo}
\[Q_p^{(1)}(\N)=\max_{\omega=id\ten \N(\rho)} I_{c,p}(A\ran B)_\omega\pl ,\pl I_{c,p}(A\ran B)_\omega=p'\log \norm{\omega^{BA}}{S_1(B,S_p(A))}\pl. \]
The following is a folklore result which probably known to experts but not stated explicitly in the literature.
\begin{prop}\label{Qp}For any channel $\N$ and all $1<p\le\infty$.,
\[\limsup_{k\to \infty} \frac{Q_p^{(1)}(\N^{\ten k})}{k},\]
is a strong converse rate of $\N$ for quantum communication 
\end{prop}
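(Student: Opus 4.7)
The plan is to adapt the sandwiched R\'enyi meta-converse developed for the classical and entanglement-assisted strong converses (see \cite{cconverse,ceaconverse,qconverse}) to the coherent-information setting. Fix $p>1$ and let $(n,R_n,\e_n)$ be any sequence of achievable quantum-communication triples, witnessed by codes $\mathcal{C}_n=(m_n,\E_n,\D_n)$ on $\N^{\ten n}$ with $\log m_n\ge nR_n$ and $\F_Q(\mathcal{C}_n,\N^{\ten n})\ge 1-\e_n$. Put $\omega_n^{AB}=(id_A\ten\N^{\ten n}\circ \E_n)(\ket{\psi_{m_n}}\bra{\psi_{m_n}})$, so that $|A|=m_n$, and $\tilde{\omega}_n^{AB'}=(id_A\ten \D_n)(\omega_n^{AB})$. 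The goal is the one-shot meta-converse
\[\F_Q(\mathcal{C}_n,\N^{\ten n})\pl\le\pl 2^{-\frac{1}{p'}(\log m_n-Q_p^{(1)}(\N^{\ten n}))}\pl,\]
from which the statement of the proposition is immediate.

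I would derive this meta-converse in three ingredients. First, rewrite the fidelity as the trace pairing $\F_Q(\mathcal{C}_n,\N^{\ten n})=tr[\ket{\psi_{m_n}}\bra{\psi_{m_n}}\tilde\omega_n^{AB'}]$ and apply H\"older's inequality for the duality between the vector-valued norms $\norm{\cdot}{S_1(B',S_p(A))}$ and $\norm{\cdot}{S_\infty(B',S_{p'}(A))}$. Second, evaluate $\norm{\ket{\psi_{m}}\bra{\psi_{m}}}{S_\infty(B',S_{p'}(A))}=m^{-1/p'}$ directly from the definition \eqref{QPP}, the extremizers $a,b$ being the scalar $m^{-1/(2p')}1$. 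Third, eliminate the decoder $\D_n$ using the data processing inequality for $\norm{\cdot}{S_1(B,S_p(A))}$ under a CPTP map on the $B$ factor---this is the operator-space form of the monotonicity of the sandwiched R\'enyi conditional entropy $H_p(A|B)$. Combining these with the identity $I_{c,p}(A\ran B)_\omega=p'\log\norm{\omega}{S_1(B,S_p(A))}$ recalled before the proposition, and the trivial input-optimization $I_{c,p}(A\ran B)_{\omega_n}\le Q_p^{(1)}(\N^{\ten n})$, yields the displayed bound.

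Given the meta-converse, the strong-converse assertion is routine. If $\liminf_n R_n>r:=\limsup_n Q_p^{(1)}(\N^{\ten n})/n$, pick $\delta>0$ with $R_n\ge r+\delta$ and $Q_p^{(1)}(\N^{\ten n})/n\le r+\delta/2$ for all sufficiently large $n$. Substituting into the meta-converse and using $\log m_n\ge nR_n$ gives $1-\e_n\le 2^{-n\delta/(2p')}$, which forces $\e_n\to 1$, as required.

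The main obstacle is the second ingredient above: the sharp Schatten evaluation of the maximally entangled projector in the vector-valued space $S_\infty(B',S_{p'}(A))$, together with the CPTP monotonicity of $\norm{\cdot}{S_1(B,S_p(A))}$. Both are standard in operator-space theory and can be proved by essentially the same complex-interpolation scheme used in the Appendix (and in the proof of Theorem \ref{pcom}), but writing them out cleanly requires careful bookkeeping of the normalizations in order to recover the exact factor $m^{-1/p'}$. Everything else---the initial H\"older step, the input optimization, and the final limit---is routine once the meta-converse is in place.
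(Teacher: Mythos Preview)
Your proposal is correct and follows essentially the same route as the paper's proof: both split the fidelity via the duality $S_1(B',S_p(A))$--$S_\infty(B',S_{p'}(A))$, evaluate $\norm{\ketbra{\psi_m}}{S_\infty(B',S_{p'}(A))}=m^{-1/p'}$ by interpolation, and control the other factor by $Q_p^{(1)}(\N^{\ten n})$. The only cosmetic difference is that you invoke data processing for $H_p(A|B)$ to remove $\D_n$, whereas the paper packages the same fact as $\norm{\D:S_1(H_B)\to S_1^m}{cb}=1$; also note that your ``extremizers $a=b=m^{-1/(2p')}1$'' in \eqref{QPP} only give the lower bound $\ge m^{-1/p'}$, so the interpolation step you mention later is indeed what carries the needed upper bound.
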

\begin{proof} Denote $R_p=\limsup_{k\to \infty} \frac{1}{k}Q_p^{(1)}(\N^{\ten k})$.
Let $m=2^M$ and $\fs{1}{p}+\fs{1}{p'}=1$. It is sufficient to show that for an arbitrary code $\C=(m,\E,\D)$ of $\N$,
\begin{align}\F(\C,\N)\le m^{-\fs{1}{p'}}\exp(\fs{1}{p'}Q^{(1)}_p(\N))\pl \label{fs} .\end{align}
Indeed, let $\C_n$ be a sequence of codes such that $\liminf_{n\to \infty} \fs{1}{n}\log |\C_n|> R_p+\epsilon$,
\begin{align*}\F(\C_n,\N^{\ten n})\le m^{-\fs{1}{p'}}\exp(\frac{1}{p'}Q^{(1)}_p(\N^{\ten n}))
=\exp(\fs{1}{p'}(Q^{(1)}_p(\N^{\ten n})-M))
\le 2^{(-\fs{1}{p'}n\epsilon)} \pl ,
\end{align*}
for $n$ large enough. To prove \eqref{fs}, we define $\omega =id_m\ten(\D\circ\N\circ\C)(\ket{\psi_m}\bra{\psi_m})$. Then the fidelity is given by
\[\F(\C, \N)=tr(\omega\ket{\psi_m}\bra{\psi_m})\le \norm{\omega}{S_1^m(S_p^m)}\norm{\ket{\psi_m}\bra{\psi_m}}{M_m(S_{p'}^m)}\pl .\]
Note that $\E$ and $\D$ are completely positive trace preserving maps and hence
\[\norm{\E: S_1^m\to S_1({H_{A'}})}{cb}=1\pl, \pl\norm{\D: S_1(H_B)\to S_1^m}{cb}=1 \pl.\]
This implies
\[\norm{\omega}{S_1^m(S_p^m)}\le 2^{{(p'Q^{(1)}_p(\N))}} \norm{\ketbra{\psi_m}}{S_1^m\ten S_1^m}=2^{(p'Q^{(1)}_p(\N))} \pl.\]
For the second term, we use the interpolation relation (see Appendix)
\[M_m(S_{p'}^m)=[M_m(S_1^m),M_m(M_m)]_{
\frac{1}{p'}}\pl.\]
We have
\begin{align}\norm{\ketbra{\psi_m}}{M_m(S_{p'}^m)}=\norm{\ketbra{\psi_m}}{M_m( M_m)}^{\frac{1}{p}} \norm{\ketbra{\psi_m} }{M_m(S_{1}^m)}^{\frac{1}{p'}}
\le {m}^{-\fs{1}{p'}}\pl , \label{me}
\end{align}
where we used the fact $\norm{\ketbra{\psi_m} }{M_m(S_{1}^m)}=1/m$.
\eqref{me} is indeed an equality. Combining these two estimates, we obtain \eqref{fs}.
\end{proof}
\begin{exam}{\label{cd}\rm We consider again the case when the Stinespring space $X^\N$ is a TRO space. Assume that $\N=\oplus_i id_{n_i}\ten tr_{m_i}$ is a direct sum of partial traces, it is not hard to calculate that
\[\chi_p(\N)=\log \big( \sum_{i}n_i\big)\pl ,\pl Q^{(1)}_p(\N)=E_{R,p}(\N)=\log \big(\max_i n_i\big)\pl.\]
Note that all these terms are additive. Let $\M=\oplus_j id_{n'_j}\ten tr_{m'_j}$ be another direct sum of partial traces. Then
\[\N\ten \M= \oplus_{i,j} id_{n_in'_j}\ten tr_{m_im'_j}\pl.\]
is again of orthogonal sum of partial traces. Apply the above formulae for $\N\ten \M$, we obtain
\begin{align*}&\chi_p(\N\ten \M)=\log ( \sum_{i,j} n_in'_j)=\log( \sum_{i}n_i)+\log( \sum_{j}n'_j)=\chi_p(\N)+\chi_p(\M) \pl,\\
&Q_p^{(1)}(\N\ten \M)=\max_{i,j}
\log n_in'_j=\max_{i}
\log n_i+\max_{j}
\log n'_j=Q_p^{(1)}(\N)+Q_p^{(1)}(\M)\pl,
\end{align*}
and similarly for $E_{R,p}$. Hence the regularization is trivial. By Proposition \ref{pt},
TRO channels has strong converse for classical, quantum and private communication.
}
\end{exam}
The following lemma is an analog of \eqref{entropyin} for R{\'e}nyi information measures.
\begin{lemma}\label{pentropyin}Let $\N$ be a channel and $f$ be a symbol of $\N$. Let $H_A$ be an arbitrary Hilbert space and $\rho^{AA'}$ be a bipartite state $H_A\ten H_{A'}$. Denote $\pl\omega^{AB}_f=id_A\ten\N_f(\rho^{AA'})$ and $\pl\omega^{AB}=id_A\ten\N(\rho^{AA'})$. Then the following inequalities hold:\begin{enumerate}
\item[i)]$I_{c,p}(A\ran B)_{\omega_1}\le I_{c,p}(A\ran B)_{\omega_f}\le I_{c,p}(A\ran B)_{\omega_1}+p'\log \norm{f}{p,\tau}\pl;$
\item[ii)]$I_p(A:B)_{\omega_1}\le I_p(A:B)_{\omega_f}\le I_p(A:B)_{\omega_1}+p'\log \norm{f}{p,\tau}\pl; $
\item[iii)]$E_{R,p}(\omega_1)\le E_{R,p}(\omega_f)\le E_{R,p}(\omega_1)+p'\log \norm{f}{p,\tau}\pl .$
\end{enumerate}
\end{lemma}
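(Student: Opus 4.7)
The plan is to lift each of the three entropic inequalities from Corollary \ref{entropyin} to the R\'enyi level by combining Corollary \ref{slocal} with the lifting identity $\E_{\mathcal{L}(X)}\circ \N_f=\N$. By Proposition \ref{tensor}, $1_A\ten f$ is a symbol of $id_A\ten \N$ with $(id_A\ten \N)_{1_A\ten f}=id_A\ten \N_f$, so Corollary \ref{slocal} yields, for every positive $\sigma\in \mathrm{Ran}(id_A\ten \N)=\B(H_A)\ten \mathcal{L}(X)$, the sandwich R\'enyi inequality
\[
D_p(\omega_f\,\|\,\sigma)\pl \le\pl D_p(\omega\,\|\,\sigma)+p'\log\norm{f}{p,\tau}\pl .
\]
Dually, data processing under the CPTP map $\Phi:=id_A\ten \E_{\mathcal{L}(X)}$ gives $D_p(\omega_f\,\|\,\sigma)\ge D_p(\omega\,\|\,\Phi(\sigma))$. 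A crucial bookkeeping point is that $\omega\in \B(H_A)\ten \mathcal{L}(X)$, so $\Phi(\omega)=\omega$.

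Part (i) is essentially already established in the proof of Corollary \ref{entropyin}: inequality \eqref{conditionalp} reads $\norm{\omega}{S_1(B,S_p(A))}\le \norm{\omega_f}{S_1(B,S_p(A))}\le \norm{f}{p,\tau}\norm{\omega}{S_1(B,S_p(A))}$, and (i) follows by taking $p'\log$ and using the identity $I_{c,p}(A\ran B)_\omega=p'\log\norm{\omega}{S_1(B,S_p(A))}$.

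For (ii), the upper bound is obtained by fixing a positive definite $\sigma^B\in \mathcal{L}(X)$, applying the displayed inequality above with reference $\rho^A\ten \sigma^B$, and then taking the infimum. The resulting restricted infimum still equals $I_p(A:B)_\omega$, because for an arbitrary $\sigma^B$ data processing under $\Phi$ gives $D_p(\omega\,\|\,\rho^A\ten \sigma^B)\ge D_p(\omega\,\|\,\rho^A\ten \E_{\mathcal{L}(X)}(\sigma^B))$ with $\E_{\mathcal{L}(X)}(\sigma^B)\in \mathcal{L}(X)$. The lower bound is immediate: data processing gives $D_p(\omega_f\,\|\,\rho^A\ten \sigma^B)\ge D_p(\omega\,\|\,\rho^A\ten \E_{\mathcal{L}(X)}(\sigma^B))\ge I_p(A:B)_\omega$, and one takes the infimum over $\sigma^B$.

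Part (iii) follows the same template with the infimum over separable states. The only new ingredient is that $\Phi=id_A\ten \E_{\mathcal{L}(X)}$ is a local CPTP operation on the $B$-side, so it preserves $S(A:B)$ and maps into $S(A:B)\cap(\B(H_A)\ten\mathcal{L}(X))$; hence the infimum in $E_{R,p}(\omega)$ may be restricted to this set without loss, and upper/lower bounds follow exactly as in (ii). The main obstacle is this ``restriction-of-infimum'' step, which relies on (a) invariance of $\omega$ under $\Phi$, guaranteed by $\mathrm{Ran}(\N)\subset \mathcal{L}(X)$, and (b) separability being preserved by local CPTP maps; both are elementary, so the whole proof reduces to applying Corollary \ref{slocal} (for the upper bound) and data processing (for the lower bound) in the same pattern as Corollary \ref{entropyin}.
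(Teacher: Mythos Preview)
Your proposal is correct and follows essentially the same route as the paper's own proof: both derive the lower bounds from data processing under $id_A\ten\E_{\mathcal{L}(X)}$ (using $\E_{\mathcal{L}(X)}\circ\N_f=\N$ and $\Phi(\omega)=\omega$), derive the upper bounds from the local comparison property after restricting the infimum to references in $\B(H_A)\ten\mathcal{L}(X)$, and handle (iii) via the observation that $id_A\ten\E_{\mathcal{L}(X)}$ preserves separability. The only cosmetic difference is that you package the tensor lift via Proposition~\ref{tensor} and Corollary~\ref{slocal} explicitly at the outset, whereas the paper invokes Theorem~\ref{pcom} directly inside each case; the content is the same.
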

\begin{proof} Let $X$ be a TRO containing $\N$'s Stinespring space and $f$ is strongly independent of $\mathcal{R}(X)$. All lower bounds follows from the factorization  property $\E_{\mathcal{L}(X)}\circ \N_f=\N$, where $\E_{\mathcal{L}(X)}:\B(H_B)\to \mathcal{L}(X)$ is the conditional expectation onto the left algebra $\mathcal{L}(X)$. The upper estimate of
i) is a direct consequence of the vector-valued $(1,p)$ norm inequality \eqref{conditionalp}. Indeed,
\begin{align*}
I_{c, p}(A\ran B)_{\omega_f}&=p'\log \norm{\omega_f}{S_1(B,S_p(A))}
\le p'\log \norm{f}{\tau,p}\norm{\omega}{S_1(B,S_p(A))}
\\&\le p'\log \norm{f}{\tau,p}+p'\log\norm{\omega}{S_1(B,S_p(A))}
\le p'\log \norm{f}{\tau,p}+I_{c, p}(A\ran B)_{\omega}\pl.
\end{align*}
For ii), note that $\E_{\mathcal{L}(X)}\circ \N=\N$,
\[id_A\ten \E_{\mathcal{L}(X)}(\omega)=id_A\ten (\E_{\mathcal{L}(X)}\circ\N)(\rho)=id_A\ten \N(\rho)=\omega\pl.\]
Therefore for the R{\'e}nyi mutual information,
\begin{align*}
I_p(A;B)_{\omega}&=\inf_{\si^B} D_p(\omega||\omega^{A}\ten \si^B)\ge  \inf_{\si^B} D_p(\omega||\omega^{A}\ten \E_{\mathcal{L}(X)}(\si^B))
\\ &\ge \inf_{\si^B\in \mathcal{L}(X)} D_p(\omega||\omega^{A}\ten \si^B)
\ge \inf_{\si^B} D_p(\omega||\omega^{A}\ten \si^B)\pl.
\end{align*}
Hence $I_p(A;B)_{\omega}=\inf_{\si^B\in \mathcal{L}(X)} D_p(\omega||\omega^{A}\ten \si^B)$ where it suffices to consider $\si^B\in \mathcal{L}(X)$ for the infimum. Combined with the Theorem \eqref{pcom}, we have
\begin{align*}I_p(A;B)_{\omega_f}&=\inf_{\si^B} D_p(\omega_f||\omega^{A}\ten \si^B)
\le \inf_{\si^B\in \mathcal{L}(X)} D_p(\omega_f||\omega^{A}\ten \si^B)
\\&\le\inf_{\si^B\in \mathcal{L}(X)} D_p(\omega||\omega^{A}\ten \si^B) +p'\log \norm{f}{\tau ,p}
=I_p(A;B)_{\omega}  +p'\log \norm{f}{\tau ,p}\pl.
\end{align*}
The upper bounds for R{\'e}nyi relative entropy of entanglement $E_{R,p}$ is similar. Note that for a separable state $\si^{AB}=\sum_{i}p(i)\si^A_i\ten \si^B_i$,
\[id_A\ten \E_{\mathcal{L}(X)}(\si^{AB})=\sum_{i}p(i)\si^A_i\ten \E_{\mathcal{L}(X)}(\si^B_i)\]
is again a separable state in $\B(H_A)\ten \mathcal{L}(X)\subset \B(H_A\ten H_B)$. Let us denote $S(H_A:\mathcal{L}(X))$ for separable states in $\B(H_A)\ten \mathcal{L}(X)$. Then
\begin{align*}
E_{R,p}(\omega)&=\inf_{\si\in S(A:B)} D_p(\omega|| \si)
\ge  \inf_{\si\in S(A:B)} D_p(\omega||id_A\ten \E_{\mathcal{L}(X)}(\si))
\\ &\ge \inf_{\si\in S(A:\mathcal{L}(X))} D_p(\omega|| \si)
\ge \inf_{\si\in S(A:B)} D_p(\omega||\si)\pl.
\end{align*}
Thus, $E_{R,p}(\omega)
= \inf_{\si\in S(A:\mathcal{L}(X))} D_p(\omega|| \si)$. Again by Theorem \ref{local},
\begin{align*} E_{R,p}(\omega_f)
&= \inf_{\si\in S(A:B)} D_p(\omega_f||\si)
\le\inf_{\si\in S(A:\mathcal{L}(X))} D_p(\omega_f||\si)
\\ &\le \inf_{\si\in S(A:\mathcal{L}(X))} D_p(\omega||\si) +p'\log \norm{f}{\tau ,p}
=E_{R,p}(\omega)  +p'\log \norm{f}{\tau ,p}\pl,
\end{align*}
which completes the proof.
\end{proof}
The next corollary is the comparison property for strong converse rates.
\begin{cor}\label{strong}Let $\N$ be a TRO channel with Stinespring space $X$ and $f$ be a symbol of $\N$. Assume that ${X\cong \oplus_i M_{n_i,m_i}}$, then \begin{enumerate}
\item[i)]$\log ( \sum_{i}n_i)\le C^\dag(\N_f)\le \log ( \sum_{i}n_i)+\tau(f\log f)\pl; $
\item[ii)]$\log (\max_i  n_i)\le Q^\dag(\N_f)\le \log (\max_i  n_i)+\tau(f\log f)$;
\item[iii)]$\log (\max_i  n_i)\le P^\dag(\N_f)\le \log (\max_i  n_i)+\tau(f\log f)$.
\end{enumerate}
\end{cor}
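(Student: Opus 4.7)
The plan is to establish each of i), ii), iii) by treating the lower and upper bounds separately. The lower bounds come essentially for free: since $C^{\dag}(\N_f) \ge C(\N_f)$ by definition, Corollary \ref{comparison} combined with the explicit TRO capacity values of Proposition \ref{pt} yields $C^{\dag}(\N_f) \ge C(\N) = \log\sum_i n_i$, and analogously $Q^{\dag}(\N_f), P^{\dag}(\N_f) \ge \log\max_i n_i$ from Proposition \ref{pt}.

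For the upper bounds, my approach is to feed the $p$-R\'enyi strong converse bounds into Lemma \ref{pentropyin}. Equation \eqref{alholevo} gives $C^{\dag}(\N_f) \le \lim_k \frac{1}{k}\chi_p(\N_f^{\ten k})$, Proposition \ref{Qp} gives $Q^{\dag}(\N_f) \le \lim_k \frac{1}{k}Q_p^{(1)}(\N_f^{\ten k})$, and the sandwiched-R\'enyi version of \eqref{Pp} gives $P^{\dag}(\N_f) \le \lim_k \frac{1}{k}E_{R,p}(\N_f^{\ten k})$; each of these is valid for every fixed $p > 1$.

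I then reduce the right-hand sides to the underlying TRO channel. By Proposition \ref{tensor}, $f^{\ten k}$ is a symbol of $\N^{\ten k}$ with $(\N_f)^{\ten k} = (\N^{\ten k})_{f^{\ten k}}$, so taking the appropriate supremum over inputs in Lemma \ref{pentropyin} --- part ii) for $\chi_p$, part i) for $Q_p^{(1)}$, and part iii) for $E_{R,p}$ --- combined with the additivity for TRO channels from Example \ref{cd} and the multiplicativity $\|f^{\ten k}\|_{p,\tau} = \|f\|_{p,\tau}^k$, produces
\[\frac{1}{k}\chi_p(\N_f^{\ten k}) \,\le\, \log\sum_i n_i \,+\, p' \log \|f\|_{p,\tau},\]
and similarly $\frac{1}{k}Q_p^{(1)}(\N_f^{\ten k}), \frac{1}{k}E_{R,p}(\N_f^{\ten k}) \le \log\max_i n_i + p'\log\|f\|_{p,\tau}$. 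Sending $k\to\infty$ and then $p \to 1^+$, together with the limit $\lim_{p\to 1^+} p'\log\|f\|_{p,\tau} = \tau(f\log f)$ already used in the proof of Corollary \ref{entropyin}, delivers the three stated upper bounds.

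The main obstacle I foresee is the $p \to 1^+$ passage, which must be compatible with the infimum over $p > 1$ provided by the strong-converse bound. Since the left-hand side $C^{\dag}(\N_f)$ does not depend on $p$, one really only needs the right-hand side to converge, and this follows from the smoothness of $p'\log\|f\|_{p,\tau}$ at $p = 1$. A subsidiary technical point is that Lemma \ref{pentropyin} is phrased for a generic bipartite input $\rho^{AA'}$, whereas $\chi_p$ involves classical-quantum inputs $\rho^{XA'}$; the bound survives this restriction because the symbol $f$ and the conditional expectation $\E_{\mathcal{L}(X)}$ act only on the $B$ system and hence commute with the classical register. A parallel check is needed for the R\'enyi version of the private-capacity strong-converse bound, whose statement at each fixed $p>1$ should be extractable from the methods of \cite{pconverse}.
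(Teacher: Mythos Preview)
Your approach is essentially the same as the paper's: take the supremum over inputs in Lemma \ref{pentropyin} to compare the R\'enyi quantities of $\N_f$ with those of $\N$, invoke the additivity of $\chi_p$, $Q^{(1)}_p$, $E_{R,p}$ for TRO channels from Example \ref{cd} together with $\|f^{\ten k}\|_{p,\tau}=\|f\|_{p,\tau}^k$, and pass to the limit $p\to 1^+$. The lower bounds, which the paper leaves implicit, you spell out correctly via Corollary \ref{comparison} and Proposition \ref{pt}.

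One simplification worth noting for $P^\dag$: you do not need a R\'enyi strong-converse bound extracted from \cite{pconverse}. The paper's route is to use \eqref{Pp} as stated, namely the single-letter bound $P^\dag(\N_f)\le E_R(\N_f)$, and then apply Lemma \ref{pentropyin} iii) (in the limit $p\to 1^+$) directly to obtain $E_R(\N_f)\le E_R(\N)+\tau(f\log f)=\log(\max_i n_i)+\tau(f\log f)$. No regularization is needed for this part, so the concern in your final sentence can be dropped.
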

\begin{proof} When $f=1$ and $\N_f=\N$, it corresponds to the formulae given in the Example \ref{cd}. Taking the supremum of all inputs $\rho^{XA'}$ for \eqref{pentropyin}, we have
\[\chi_p(\N_f)\le \chi_p(\N)+p'\log \norm{f}{\tau ,p}\pl.\]
The upper bound of $C^\dag(\N_f)$ follows from regularization based on the upper estimate \eqref{alholevo},
\begin{align*}C^\dag(\N_f)&\le \lim_{k\to \infty} \frac{1}{k}\chi_p(\N_f^{\ten k})
\le  \lim_{k\to \infty} \frac{1}{k}(\chi_p(\N^{\ten k})+ p'\log \norm{f^{\ten k}}{\tau^k ,p})
\\&\le \lim_{k\to \infty}\frac{1}{k}\Big(k\chi_p(\N)+ kp'\log \norm{f}{\tau ,p}\Big)
= \log \big( \sum_{i}{n_i}\big)+ p'\log \norm{f}{\tau ,p}
\end{align*}
where we used the facts that
\[\norm{f^{\ten k}}{\tau^n ,p}=\norm{f}{\tau ,p}^k \pl \text{and}\pl \pl  \chi_p(\N^{\ten k})=k\chi_p(\N)\pl .\]
Then taking the limit $p\to 1^+$ yields
\[C^\dag(\N_f)\le \log \big( \sum_{i}{n_i}\big)+\lim_{p \to 1^+}p'\log \norm{f}{\tau ,p}=\log \big( \sum_{i}{n_i}\big) +\tau(f\log f)\]
The argument for $P^{\dag}$ and $Q^{\dag}$ follow similarly with the upper bounds \eqref{Pp} and  Proposition \ref{Qp}.
\end{proof} 
\section{Examples}
\subsection{Random unitary} Random unitary channels are convex combination of unitary conjugation maps. We observe that the random unitary gives a class of TRO-channels if the unitaries form a projective unitary representation of a group.

Let $G$ be a finite group and $\T$ be the unit complex scalars. We write $1$ as the identity element of $G$. A projective unitary representation of $G$ is a map $u$ from $ G$ into unitary group $U(H)$ of some Hilbert space $H$ such that
\[u(g)u(h)=\si(g,h)u(gh) \pl, \pl g,h\in G\pl,\]
where $\si(g,h)$ is a function $\si: G\times G \to \T$. A projective unitary representation is a representation up to a phase factor, or into the quotient $U(H)/\T$.
Because the laws of group multiplication, the function $\si$ satisfies the following conditions
\begin{enumerate}
\item[i)] $\si(g,1)=\si(1,g)=1\pl ;$
\item[ii)] $\si(g,g')\si(gg',g'')=\si(g,g'g'')\si(g',g'')\pl ;$
\end{enumerate}
for all $g,g',g''\in G$. Suppose $|G|=n$ and $dim H=m$ are finite. We define a $m$-dimensional channel $\N: \B(H) \to \B(H)$ as follows
\[\pll \N(\rho)=\frac{1}{n}\sum_g u(g)\rho u(g)^* \pl.\]
Its Stinespring isometry is given by
\[V: H \to H\ten l_2(G) \pll, \pll V\ket{h}= \frac{1}{n}\sum_{g} u(g)\ket{h}\ten \ket{g} \pl ,\]
where $l_2(G)$ is the Hilbert space spanned by the canonical orthogonal basis $\{\ket{g}\pl |\pl g\in G\}$. The Stinespring space $X$, as a subspace of operators $\B(l_2(G),H)$, is
\[X=ran(V)=\{\pl \sum_{g} u(g)\ket{h}\ten \bra{g}\pl | \pl \ket{h}\in H  \pl \} \pl.\]
We claims that $X$ is a TRO space. Let $\ket{h_1},\ket{h_2},\ket{h_3}$ be vectors in $H$ and $h,h_1,h_2$ be the corresponding operators in $X\subset \B(l_2(G),H)$
\begin{align*}
h_1h_2^*h_3=&\sum_{g,g',g''}
u(g)\ket{h_1}\bra{g}g'\ran \bra{h_2}u^*(g')u(g'')\ket{h_3}\bra{g''}\\
=&\sum_{g,g''}
u(g)\ket{h_1} \bra{h_2}u^*(g)u(g'')\ket{h_3}\bra{g''}\\
=&\sum_{g''}n\N(\ket{h_1} \bra{h_2})u(g'')\ket{h_3} \bra{g''}\\
=&\sum_{g''} u(g'')\Big(nN(\ket{h_1} \bra{h_2})\ket{h_3} \Big)\bra{g''}
\end{align*}
In the last step, we use the fact $\N$ is the conditional expectation onto the commutant $u(G)'\subset \B(H)$. Indeed, for $g_0\in G$,
\begin{align*}\N(\rho) u(g_0)&=\frac1n\sum_{g}u(g)\rho u(g)^*u(g_0)
=\frac1n\sum_{g}u(g)\rho u(g)^*u(g_0^{-1})^*\si(g_0,g_0^{-1})\\&= \frac{1}{n}\sum_{g}u(g)\rho u(g_0^{-1}g)^*\overline{\si(g_0^{-1},g)}\si(g_0,g_0^{-1})
\\&=\frac{1}{n}\sum_{g}u(g_0)u(g_0^{-1}g)\rho u(h^{-1}g)^*\overline{\si(g_0,g_0^{-1})\si(g_0^{-1},g)}\si(g_0,g_0^{-1})
\\&= \frac{1}{n}\sum_{g}u(g_0)u(g_0^{-1}g)\rho u(g_0^{-1}g)^* =u(g_0)\N(\rho)\pl.
\end{align*}
Here we use the conditions of the phase factor $\si$,
\begin{align*}\overline{\si(g_0,g_0^{-1}g)\si(g_0^{-1},g)}\si(g_0,g_0^{-1})&=\overline{\si(1,g)\si(g_0,g_0^{-1})\si(g_0,g_0^{-1})}
\\&=\overline{\si(1,g)\si(g_0,g_0^{-1})}\si(g_0,g_0^{-1})=1
\end{align*}
Thus we verify that $X$ is a tenary ring of operators in $\B(l_2(G),H)$. The left algebra
$\mathcal{L}(X)=ran(\N)$ is exactly the commutant $u(G)'$. For the right $C^*$-algebra $\mathcal{R}(X)$,
\begin{align*}h_1^*h_2&=\sum_{g,g_0}\ket{gg_0^{-1}}\bra{g} \bra{h_1}u(gg_0^{-1})^*u(g)\ket{h_2}
\\ &=\sum_{g,g_0}\ket{gg_0^{-1}}\bra{g} \bra{h_1} \si(g,h^{-1})u(g_0)^*\ket{h_2}
\\ &=\sum_{g_0}\bra{h_1} u(g_0)^*\ket{h_2}(\sum_g\si(g,g_0^{-1})\ket{gg_0^{-1}}\bra{g} ) \pl.
\end{align*}
This gives an element in the $\si$-twisted right regular representation $\pi_\si: G \to B(l_2(G))$,
\[\rho_\si(g_0)\ket{g}=\si(g,g_0^{-1})\ket{gg_0^{-1}} \pl.\]
Thus $\mathcal{R}(X^\N)\subset \rho_\si(G)$ as a subalgebra. The diagonal matrices $l_\infty(G)$
 is an algebra independent of $\rho_\si(G)$. Indeed, for $f=\sum_g f(g)\ket{g}\bra{g}$ be a diagonal matrix in $l_\infty(G)$ and $x=\sum_g \al(g)\rho_\si(g)$ be a element in $\pi_\si(G)$,
\[tr(fx)=\frac{1}{n}tr(f)\al(1) =tr(f)\tau(x)\pl.\]
Given a normalized density $f\in l_\infty(G)$ $(\sum f(g)=|G|, f\ge 0)$, the channel $\N_f$ is
\[\N_f(\rho)=\frac{1}{|G|}\sum_g f(g)u(g)\rho u(g)^*\pl.\]
It is clear from above that all our estimates apply here. Assume that $u(G)'=\oplus_k M_{n_i}\ten 1_{m_i}$. $m_i$'s are the dimensions of the irreducible decomposition of $u$ and $n_i$'s are corresponding multiplicities. Then
\[C(\N)=\log \big(\sum_i n_i\big) \pl , \pl Q(\N)=P(\N)=\log \big(\max_i n_i\big) \pl,\]
and for all normalized densities $f\in l_\infty(G)$,
\begin{enumerate}
\item[i)]$\log \big(\sum_i n_i \big)\le C(\N_f)\le C^{\dag}(\N_f) \le \log \big(\sum_i n_i\big)+\tau(f\log f)\pl;$
\item[ii)]$\log (\max_i n_i)\le Q(\N_f)\le P(\N) \le P^{\dag}(\N_f) \le \log \big(\max_i n_i\big)+\tau(f\log f)\pl.$
\end{enumerate}
When the group $G$ is noncommutative, then the dimensions $m_i$ of irreducible representations may greater then $1$. In this situation, the random unitary channel $\N_f$ are in general not degradable because they depolarize matrix blocks of nontrivial size.
\subsection{Generalized dephasing channels}
Generalized dephasing channels are also called Schur multipliers in the literature (see e.g. \cite{paulsen}). They are special cases of Hadamard channels which are known to be degradable, hence the quantum capacity and private capacity does not require regularization, and $Q^{(1)}=Q=P$. Our estimates here recovers the quantum capacity formula in \cite{Neufang} in a different way. Both approaches are based on the unpublished joint work \cite{JNR2}. Our approach provides a new proof of $Q=Q^{(p)}$ for these particular Schur multipliers. This is already known \cite{WD} thanks to the fact that Hadamard channels are strongly additive for $Q^{(1)}$.

The Schur multiplication (or Hadamard product) of matrices is given by
\begin{align*}(a_{ij})*(b_{ij})=(a_{ij}\cdot b_{ij}).\end{align*}
It is a well-known fact (see \cite{paulsen}) that the multiplier map for a given matrix $a=(a_{ij})$,
\begin{align*}M_a (b)=a* b\ \ \ \text{for}\  b=(b_{ij})\in M_n\pl ,\end{align*}
is completely positive if and only if $a$ is positive. Clearly, $M_a$ is trace preserving if and only if $a_{ii}=1$ for $1\le i\le n$.

Let $G$ be a finite group with order $|G|=n$. A function $f:G\to \CC$ is positive definite if for any finite sequence $g_1,g_2,\cdots,g_k\in G$, the matrix $(f(g_j^{-1}g_i))_{i,j=1}^{k}$
is positive. Consider the Schur multiplier
\[\M_f: \B(l_2(G))\to \B(l_2(G))\pl,\pl \M_f(\ket{g}\bra{g'})=f(g'^{-1}g)\ket{g}\bra{g'} \pl.\]
$\M_f$ is completely positive if $f$ is positive definite and $f(1)=1$. In particular, the function
$\delta(g)=\begin{cases}
                                     1, & \mbox{if } g=1 \\
                                     0, & \mbox{otherwise}.
                                   \end{cases}$ gives the completely dephasing channel
\[ \M_\delta(\sum_{g,g'} a_{g,g'}\ket{g}\bra{g'})=\sum_{g}a_{gg}\ket{g}\bra{g}\pl.\]
This is a TRO channel as its Stinespring dilation is given by
\[V:l_2(G)\to l_2(G)\ten l_2(G)\pl, \pl V\ket{g}=\ket{g}\ten \ket{g}\]
The Stinespring space, via the identification
$\ket{g}\ten \ket{g'}\longleftrightarrow \ket{g}\bra{g'}$, is the diagonal matrices
$l_\infty(G)\subset \B(l_2(G))$. $\M_f$ can be written as a modified channel of symbol $f=\sum_{g,g'}f(g'^{-1}g)\ket{g}\bra{g'}$ as follows,
\[\M_f(\ket{g}\bra{g'})=\ket{g}\bra{g}f\ket{g'}\bra{g'}\pl.\]
Such an operator $f\in \B(l_2(G))$ belongs to the right regular representation and is strongly independent to $l_\infty(G)$. Note that $\M_\delta$ is a channel with commutative range hence has $C(\M_\delta)=\log |G|, Q(\M_\delta)=P(\M_\delta)=0$. Therefore, for any Schur multiplier given by positive definite functions, Theorem \ref{comparison} gives
\[ Q(\M_f)\le P(\M_f)\le \tau(f\log f)=\log |G|-H(\frac{1}{|G|}f)\pl.\]
Recall that the negative $cb$-entropy $-S_{cb}$ is a lower bound for $Q^{(1)}$ for unital channels $\N$. Then Theorem \ref{negativecb} gives the lower bound via $-S_{cb}(\M_f)=\tau(f\log f)$ hence we have the formula
 \[Q(\M_f)= P(\M_f)=\log |G|-H(\frac{1}{|G|}f) \pl,\]
which recovers the formula from \cite{Neufang} in a different way.
\begin{exam}\label{dephasing}{\rm
The qubit example is the dephasing channel. Let $0\le q\le 1$ be the dephasing parameter, we have
\begin{align*}
\Phi_q \large(\big[ \begin{array}{cc}
    a&b\\
    c&d
  \end{array}\big]\large)=\big[\begin{array}{cc}
    a&qb\\
   q c&d
  \end{array}\big]\pl .
\end{align*}
This corresponds to $G=\mathbb{Z}_2$ for $f=\big[\begin{array}{cc}
    1 &q\\
   q &1
  \end{array}\big]$ in our setting. The formula for the quantum capacity is $Q(\Phi_q)=\log 2-H(\fs{1+q}{2})=\tau(f\log f)$.}
\end{exam}
When the dimension $d>2$, not every generalized dephasing channel can be expressed via positive definite functions.

\subsection{Small dimensional example} We provide a concrete example in small dimensions which are nondegradable channels and our upper bound are tight. Let $|\al|\le 1$ be a real number. Define the channel $\Phi_\al :M_4\to M_3$ as follows,
\begin{align*}
\Phi_\al(\left[
\begin{array}{cccc}
    a_{11}& a_{12}&a_{13}&a_{14}\\
   a_{21}& a_{22}&a_{23}&a_{24}\\
   a_{31}& a_{32}&a_{33}&a_{34}\\
   a_{41}& a_{42}&a_{43}&a_{44}
  \end{array}\right] )=\left[
\begin{array}{ccc}
    a_{11}+a_{22}& \al a_{13}&\al a_{24}\\
   \al a_{31}&a_{33}&0\\
   \al a_{42}&0&a_{44}
  \end{array}\right]
\end{align*}
This channel is non-degradable since it traces out the first $2\times 2$ block. We claim that
\[Q^{(1)}(\Phi_\al)=Q^{(p)}(\Phi_\al)=Q^\dag(\Phi_\al)=P^{(1)}(\Phi_\al)=P^{(p)}(\Phi_\al)=P^{\dag}(\Phi_\al)=1-h(\frac{1+\al}{2})\pl,\]
where $h(\la)=-\la\log\la -(1-\la)\log (1-\la)$ is the binary entropy function. Let us first consider the diagonal part of the channels. That is when $\al=0$,
\begin{align*}
\Phi_0\kla \left[
\begin{array}{cccc}
    a_{11}& a_{12}&a_{13}&a_{14}\\
   a_{21}& a_{22}&a_{23}&a_{24}\\
   a_{31}& a_{32}&a_{33}&a_{34}\\
   a_{41}& a_{42}&a_{43}&a_{44}
  \end{array}\right] \mer=\left[
\begin{array}{ccc}
    a_{11}+a_{22}& 0&0\\
   0&a_{33}&0\\
   0&0&a_{44}
  \end{array}\right]
\end{align*}
It is an orthogonal sum of partial trace maps hence the Stinespring space corresponds to a TRO. Let $\{e_i\}$ be the standard (computational) basis. The Stinespring isometry $V_0$ of $\Phi_0$ is given by
\[V(\sum_{i}h_ie_i)= h_1e_1\ten e_1+h_2e_1\ten e_2+h_3e_2\ten e_3+h_4e_3\ten e_4 \in \mathbb{C}^3\ten \mathbb{C}^4 \pl.\]
The corresponding operators are $3\times 4$ matrices,
$h=  \left[\begin{array}{cccc}
  h_1& h_2&0&0\\
   0& 0&h_3&0\\
   0& 0&0&h_4
  \end{array}\right] $.
Then the Stinespring space $X=M_{1,2}\oplus\mathbb{C}\oplus\mathbb{C}$ as a TRO.
The left and right algebra are given by
\[\mathcal{L}(X)=\mathbb{C}\oplus\mathbb{C}\oplus\mathbb{C}\pl ,\pl \mathcal{R}(X)=M_2\oplus\mathbb{C}\oplus\mathbb{C}\pl.\]
Let $S=\left[\begin{array}{cccc}
    0& 0&1&0\\
 0& 0&0&1\\
   1& 0&0&0\\
  0& 1&0&0
  \end{array}\right] $.
One verifies that the only nontrivial $*$-subalgebra independent of $\mathcal{R}(X)$ in $M_4$ is
\[N=\{\pl\beta I+\al S \pl|\pl \al, \beta\in \mathbb{C}\}\pl.\]
The normalized densities in $N$ given by the one-parameter class $\{I+\al S|-1\le \al \le 1\}$. Denote the symbol $f_\al=I+\al S$. Note that $f_0$ is the identity $1_E$. $\Phi_\al$ is a modified TRO channel with symbol $f_\al$,
\[\Phi_\al(\ket{h}\bra{h})= hfh^*=\left[
\begin{array}{cccc}
    h_1& h_2&0&0\\
   0& 0&h_3&0\\
   0& 0&0&h_4

  \end{array}\right]
\left[  \begin{array}{cccc}
    1& 0&\al&0\\
 0& 1&0&\al\\
   \al& 0&1&0\\
  0& \al&0&1
  \end{array}\right]
  \left[ \begin{array}{ccc}
    \overline{h_1}& 0&0\\
    \overline{h_2}& 0&0\\
   0&  \overline{h_3}&0\\
 0& 0& \overline{h_4}
  \end{array}\right] \pl.\]
Via a change of basis, one can identify $f=I_2 \ten  \left[ \begin{array}{cc}
    1+\al& 0\\
   0& 1-\al\\
\end{array}\right]$. Thus for the entropy term we have $\tau(f\log f)=1-h(\frac{1+\al}{2})$. Since $\Phi_0$'s outputs are all diagonal matrices, then
  \[Q^{(p)}(\Phi_0)=P^{(p)}(\Phi_0)=Q^{\dag}(\Phi_0)=P^{\dag}(\Phi_0)=0 \pl.\]
By our comparison estimates (Correllary \ref{comparison} and \ref{strong}), we obtain that $1-h(\frac{1+\al}{2})$ is an upper bound for $Q^{(p)}(\Phi_\al),P^{(p)}(\Phi_\al),Q^{\dag}(\Phi_\al)$ and $P^{\dag}(\Phi_\al)$. On the other hand, $1-h(\frac{1+\al}{2})$ is the quantum capacity of a qubit dephasing channel with parameter $\al$,
\begin{align*}
\Psi_\al\kla \left[
\begin{array}{cc}
    a_{11}& a_{12}\\
   a_{21}& a_{22}\\
  \end{array}\right] \mer=\left[
\begin{array}{cc}
    a_{11}& \al a_{12}\\
    \al a_{21}&a_{22}\\
  \end{array}\right]\pl,
\end{align*}
which can be implemented in $\Phi_\al$ by using the block input
\[\left[
\begin{array}{cccc}
    a_{11}& 0&a_{13}&0\\
  0& 0&0&0\\
   a_{31}& 0&a_{33}&0\\
   0& 0&0&0
  \end{array}\right] \pll \pl \text{or} \pl \pll
  \left[
\begin{array}{cccc}
   0&  0&0&0\\
  0&  a_{22}&0& a_{24}\\
   0& 0&0&0\\
   0&  a_{42}&0& a_{44}
  \end{array}\right] \pl.
\]
By the fact $Q\le P,Q^{(p)},P^{(p)},Q^{\dag},P^{\dag}$, the upper bound $1-h(\frac{1+\al}{2})$ is achievable.


\emph {Acknowledgements}--- We thank Mark M. Wilde for helpful discussions and comments. 

\section{Appendix: Complex interpolation and Noncommutative $L_p$ spaces}
In this Appendix, we briefly review the complex interpolation theory that is used in the proof of Theorem \ref{pcom}. The readers are referred to \cite{BL} for interpolation theory and \cite{pisier93} for vector-valued noncommutative $L_p$ spaces.

Two Banach spaces $X_0$ and $X_1$ are compatible if there exists a Hausdorff topological vector space $X$ such that $X_0, X_1\subset X$ as subspaces. The sum space $X_0+X_1$ is a Banach space
\[X_0+X_1:\lel \{x\in X\pl |\pl x=x_0+x_1\pl \text {for some}\pl x_0\in X_0, x_1\in X_1\}\pl,\]
equipped with the norm
\[\norm{x}{X_0+X_1}=\inf_{x=x_0+x_1} (\norm{x_0}{X_0}+\norm{x_1}{X_1})\pl.\]
Let $S=\{z|0\le Re (z)\le 1\}$ be the vertical strip of unit width on the complex plane, and let $S_0=\{z|0< Re (z)< 1\}$ be its open interior. We denote by $\F(X_0, X_1)$  the space of all functions $f:S\to X_0+X_1$, which are bounded and continuous on $S$ and analytic on $S_0$, and moreover
\[\{f(it)\pl|\pl t\in \mathbb{R}\}\subset X_0\pl ,\pl \{f(1+it)\pl|\pl t\in \mathbb{R}\}\subset X_1\pl.\]
$\F(X_0, X_1)$ is again a Banach space with the norm
\[\norm{f}{\F}=\max\{\pl \sup_{t\in \mathbb{R}} \norm{f(it)}{X_0}\pl,\pl \sup_{t\in \mathbb{R}}\norm{f(1+it)}{X_1}\}\pl. \]
The complex interpolation space $(X_0,X_1)_\theta$, for $0<\theta<1$, is the quotient space of $\F(X_0,X_1)$ given as follows,
\[(X_0, X_1)_\theta=\{\pl x\in X_0+X_1\pl| \pl x=f(\theta)\pl, \pl f\in \F(X_0, X_1)\pl\} \pl.\]
The quotient norm is defined as
\[\norm{x}{\theta}=\inf \{\pl\norm{f}{\F}\pl| \pl f(\theta)=x \pl\}\pl .\]
For example, the Schatten-$p$ class is the interpolation space of bound operator and trace class
\[S_p(H)=(B(H), S_1(H))_{\frac{1}{p}}\pl.\]
 This generalizes to vector-valued noncommutative $L_p$-space $S_p(A,S_q(B))$ (see \cite{pvp}). In particular, for any $1\le p, q\le \infty$ one has the relations
\begin{align*}&S_p(A, S_q(B))=[S_\infty(A, S_q(B)),S_1(A, S_q(B))]_{\frac1p} \pl ,\\ \pl &S_p(A, S_q(B))=[S_\infty(A, S_\infty(B)),S_1(A, S_1(B))]_{\frac1q} .\end{align*}
The following Stein's interpolation theorem (cf. \cite{BL}) is a key tool in our analysis. \begin{theorem}\label{stein}
Let $(X_0,X_1)$ and $(Y_0,Y_1)$ be two compatible couples of Banach spaces. Let $\{T_z| z\in S\}\subset \B(X_0+X_1, Y_0+Y_1)$ be a bounded analytic family of maps such that
\[\{T_{it}|\pl t\in \mathbb{R}\}\subset \B(X_0,Y_0)\pl ,\pl \{T_{1+it}|\pl t\in \mathbb{R}\}\subset  \B(X_1,Y_1)\pl.\]
Suppose $\Lambda_0=\sup_t{\norm{T_{it}}{\B(X_0,Y_0)}}$ and  $\Lambda_1=\sup_t{\norm{T_{1+it}}{\B(X_1,Y_1)}}$ are both finite, then for $0< \theta < 1$, $T_\theta$ is a bounded linear map from $(X_0,X_1)_\theta$ to $(Y_0,Y_1)_\theta$ and
\[\norm{T_\theta}{\B((X_0,X_1)_\theta ,(Y_0,Y_1)_\theta)}\le \Lambda_0^{1-\theta}\Lambda_1^{\theta}\pl .\]
\end{theorem}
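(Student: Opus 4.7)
The plan is to reduce Stein's theorem to Hadamard's three lines lemma by constructing, for each $x \in (X_0,X_1)_\theta$, an $\F(Y_0,Y_1)$-valued analytic function whose value at $\theta$ is essentially $T_\theta x$.

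First I would recall the scalar three lines lemma: if $F:S\to\mathbb{C}$ is bounded, continuous, and analytic on $S_0$, with $|F(it)|\le M_0$ and $|F(1+it)|\le M_1$ for all real $t$, then $|F(\theta+it)|\le M_0^{1-\theta}M_1^{\theta}$ for $0<\theta<1$. This is the engine of the argument, and the entire proof consists of engineering an analytic scalar function to which it applies.

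Next I would fix $x\in (X_0,X_1)_\theta$ and, given $\varepsilon>0$, choose $f\in\F(X_0,X_1)$ with $f(\theta)=x$ and $\|f\|_\F\le \|x\|_\theta+\varepsilon$. I would then define $g(z)=T_z f(z)\in Y_0+Y_1$. Using the assumption that the family $\{T_z\}$ is analytic and uniformly bounded from $X_0+X_1$ to $Y_0+Y_1$, together with the analyticity of $f$, one checks that $g$ is bounded and continuous on $S$ and analytic on $S_0$. On the boundary one has $\|g(it)\|_{Y_0}\le \Lambda_0\|f(it)\|_{X_0}\le \Lambda_0\|f\|_\F$ and $\|g(1+it)\|_{Y_1}\le \Lambda_1\|f(1+it)\|_{X_1}\le \Lambda_1\|f\|_\F$, so $g\in\F(Y_0,Y_1)$.

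To extract the multiplicative interpolation bound, I would introduce the renormalized function $\tilde g(z)=\Lambda_0^{z-1}\Lambda_1^{-z}g(z)$ (with the usual convention if some $\Lambda_i=0$, handled by perturbing $\Lambda_i\mapsto \Lambda_i+\delta$ and sending $\delta\to 0$ at the end). The factor $\Lambda_0^{z-1}\Lambda_1^{-z}$ is entire and bounded on $S$, has modulus $\Lambda_0^{-1}$ on $\{\mathrm{Re}(z)=0\}$ and $\Lambda_1^{-1}$ on $\{\mathrm{Re}(z)=1\}$, so $\|\tilde g(it)\|_{Y_0}\le \|f(it)\|_{X_0}$ and $\|\tilde g(1+it)\|_{Y_1}\le \|f(1+it)\|_{X_1}$. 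Hence $\tilde g\in\F(Y_0,Y_1)$ with $\|\tilde g\|_\F\le \|f\|_\F$. Since $\tilde g(\theta)=\Lambda_0^{\theta-1}\Lambda_1^{-\theta}T_\theta x$ lies in $(Y_0,Y_1)_\theta$, taking the interpolation norm and rearranging gives
\[
\|T_\theta x\|_{(Y_0,Y_1)_\theta}\le \Lambda_0^{1-\theta}\Lambda_1^{\theta}\|f\|_\F\le \Lambda_0^{1-\theta}\Lambda_1^{\theta}(\|x\|_\theta+\varepsilon).
\]
Letting $\varepsilon\to 0$ yields the desired bound.

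The main obstacle I anticipate is verifying the regularity of $g$ — that $z\mapsto T_z f(z)$ really is bounded and analytic on the strip with the stated boundary behavior. The hypothesis of a ``bounded analytic family'' needs to be interpreted carefully (typically as weak analyticity of $z\mapsto T_z y$ for each $y\in X_0+X_1$ together with a uniform bound on $\|T_z\|_{\B(X_0+X_1,Y_0+Y_1)}$ on $S$), and one usually pairs with functionals $\varphi\in (Y_0+Y_1)^*$ to reduce the whole argument to the scalar three lines lemma applied to $z\mapsto \Lambda_0^{z-1}\Lambda_1^{-z}\varphi(g(z))$. Once that scalar reduction is in hand, the rest is bookkeeping.
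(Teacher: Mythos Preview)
Your argument is the standard proof of Stein's interpolation theorem and is correct, including your identification of the one genuine technical point (the analyticity and boundary behavior of $z\mapsto T_z f(z)$, handled by pairing with functionals and reducing to the scalar three lines lemma).

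However, there is nothing to compare with: the paper does not prove this theorem. It is stated in the Appendix as a known tool with the attribution ``(cf.\ \cite{BL})'' and is used as a black box in the proofs of Theorem~\ref{pcom} and Theorem~\ref{negativecb}. So your proposal is not an alternative to the paper's proof but rather a proof the paper chose to omit; what you wrote is essentially the textbook argument one would find in Bergh--L{\"o}fstr{\"o}m.
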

\noindent In particular, when $T$ is a constant map, the above theorem implies
\begin{align}
\norm{T}{\B((X_0,X_1)_\theta ,(Y_0,Y_1)_\theta)} \le \norm{T}{\B(X_0 ,Y_0)}^{1-\theta}\norm{T}{\B(X_1 ,Y_1)}^{\theta} \pl . \label{interpolation1}
\end{align}

Let $X\subset \B(H,K)$ be a TRO. Denote $X_p$ the closure of intersection $X\cap S_p(H,K)$ in the Schatten $p$-class. TROs $X$ and their corresponding subspaces $X_p$ in $S_p(H,K)$ are completely $1$-complemented for all $ p\in [1,\infty]$ (see \cite{EOR2001,NO2002}). That is, there exists a projection map $\mathcal{P}$ from $\B(H,K)$ (resp. $S_p(H,K)$) onto $X$ (resp. $X_p$) such that $id_{n}\ten \mathcal{P}$ is contractive for every $n$. A direct consequence is that $X_p$ are interpolation spaces of $X_1$ and $X=X_\infty$,
\[X_p=(X_\infty, X_1)_{\frac{1}{p}}\pl.\]
In the proof of Theorem \ref{pcom}, we used the following simple application of Kosaki-type interpolation \cite{Kosaki1984}.
\begin{theorem}\label{interpolation}
Let $X$ be a TRO. For a positive operator $\si\in \mathcal{L}(X)$ and $1\le p\le \infty$, define $X_{p,\si}$ as the space $X$ equipped with the following norms,
\[\norm{x}{p,\si}:\lel\norm{\si^{\frac{1}{p}}x}{p}\pl.\]
Then
\[[X_{\infty}, X_{1,\si}]_{\frac{1}{p}}=X_{p,\si}\pl.\]
\end{theorem}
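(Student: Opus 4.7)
The statement is a Kosaki-type interpolation result for the TRO $X$ weighted by $\si \in \mathcal{L}(X)$, and my plan is to establish the two norm inequalities $\|x\|_{p,\si} \le \|x\|_{[X_\infty, X_{1,\si}]_{1/p}}$ and $\|x\|_{[X_\infty, X_{1,\si}]_{1/p}} \le \|x\|_{p,\si}$ separately, working with invertible $\si$ first and then reducing the general case by restricting to its support projection inside $\mathcal{L}(X)$ (which, being a projection in the left algebra, cuts $X$ down to a sub-TRO).

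The inclusion $[X_\infty, X_{1,\si}]_{1/p} \hookrightarrow X_{p,\si}$ I will obtain as an immediate consequence of Stein's interpolation theorem (Theorem \ref{stein}) applied to the analytic family $T_z(x) = \si^z x$ on the strip $S$. Since $\si \in \mathcal{L}(X)$ is positive, the family $\{\si^z\}_{z \in S}$ is a bounded analytic family of operators inside $\mathcal{L}(X)$, with $\si^{it}$ unitary on the support of $\si$ for each $t \in \mathbb{R}$. On the two boundaries of the strip one directly checks $\|T_{it}(x)\|_\infty \le \|x\|_{X_\infty}$ and $\|T_{1+it}(x)\|_1 = \|\si^{it}\si x\|_1 \le \|\si x\|_1 = \|x\|_{X_{1,\si}}$. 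Stein's theorem at $z = 1/p$ then gives the bound $\|\si^{1/p} x\|_p \le \|x\|_{[X_\infty, X_{1,\si}]_{1/p}}$, which is exactly this direction.

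For the reverse direction $X_{p,\si} \hookrightarrow [X_\infty, X_{1,\si}]_{1/p}$, I will construct an explicit analytic lift, imitating the usual proof of $S_p = [B(H), S_1]_{1/p}$ but performed inside the TRO. Given $x \in X$ with $\|\si^{1/p} x\|_p = 1$, set $y = \si^{1/p} x \in S_p$ and take the polar decomposition $y = u|y|$. The crucial algebraic observation is that $y^* y = x^* \si^{2/p} x$ lies in $\mathcal{R}(X)$: indeed $\si^{2/p} \in \mathcal{L}(X)$ gives $\si^{2/p} x \in \mathcal{L}(X)\cdot X = X$, whence $x^*(\si^{2/p} x) \in X^* X \subset \mathcal{R}(X)$. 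By continuous functional calculus $|y|^\alpha$ lies in $\mathcal{R}(X)$ for all $\alpha$, and the bimodule identity $X \cdot \mathcal{R}(X) \subset X$ shows $u|y|^\alpha = y|y|^{\alpha-1} \in \si^{1/p}X$ whenever $|y|$ is invertible on its support. Define then
\[
f(z) \;=\; \si^{-z}\, u\, |y|^{pz}, \qquad z \in S,
\]
which is analytic and by the preceding remark takes values in $\si^{1/p-z} X \subset X$. A direct computation gives $f(1/p) = \si^{-1/p}y = x$ and the boundary estimates
\[
\|f(it)\|_\infty = \|\si^{-it} u |y|^{itp}\|_\infty \le 1, \qquad \|\si f(1+it)\|_1 = \|\si^{-it} u |y|^{p(1+it)}\|_1 = \||y|^p\|_1 = 1,
\]
so $f$ belongs to $\mathcal{F}(X_\infty, X_{1,\si})$ with norm at most one. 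This gives $\|x\|_{[X_\infty, X_{1,\si}]_{1/p}} \le 1 = \|x\|_{p,\si}$, closing the argument.

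The main obstacle is precisely to verify that $f(z)$ stays inside the TRO $X$ throughout the strip, which is where the bimodule structure $\mathcal{L}(X)\cdot X \cdot \mathcal{R}(X) = X$ is essential: the identity $y^* y \in \mathcal{R}(X)$ is what allows the functional calculus of $|y|$ to interact with the TRO, and $\si^{1/p - z} \in \mathcal{L}(X)$ provides the multiplier on the left. Without the TRO hypothesis, $u$ and $|y|^{pz}$ would merely be operators in $\B(H,K)$ and the analytic interpolant could leave the subspace. Handling non-invertible $\si$ or non-invertible $|y|$ requires a standard approximation by support projections but introduces no new ideas.
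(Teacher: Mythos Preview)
Your proof is correct and essentially follows the paper's own argument. The explicit analytic lift $f(z)=\si^{-z}u|y|^{pz}$ is exactly the function the paper uses (there written $x(z)=\si^{-z}vy^{pz}$), and your observation that $|y|\in\mathcal{R}(X)$ via $y^*y=x^*\si^{2/p}x$ is the right algebraic point (the paper's claim ``$y\in\mathcal{L}(X)$'' is a slip; the analytic function needs $|y|$ in the right algebra, as you have it).

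The one genuine difference is in the direction $\|x\|_{p,\si}\le\|x\|_{[X_\infty,X_{1,\si}]_{1/p}}$. You invoke Stein's theorem (Theorem~\ref{stein}) with the operator family $T_z=\si^z\,\cdot$ mapping $[X_\infty,X_{1,\si}]_{1/p}$ into the already-known interpolation scale $[X_\infty,X_1]_{1/p}=X_p$. The paper instead argues by duality: for a test element $a$ with $\|a\|_{p'}\le 1$ it bounds the scalar function $h(z)=tr(\si^z x(z)a(z))$ on the strip via the maximum principle. Your route is a bit cleaner since Stein is already stated in the appendix and the standard Schatten interpolation $[X_\infty,X_1]_{1/p}=X_p$ is available; the paper's duality argument is in effect a hands-on reproof of Stein in this particular instance. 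Both are equally valid, and your added care in verifying $f(z)\in X$ through the bimodule identity $\mathcal{L}(X)\cdot X\cdot\mathcal{R}(X)=X$ makes explicit a step the paper leaves implicit.
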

\begin{proof}Let us first assume that $\si$ is invertible. For $x\in X$ such that $\norm{\si^{\frac{1}{p}}x}{p}=1$, we consider the polar decomposition $\si^{\frac{1}{p}}x=v|\si^{\frac{1}{p}}x|:= vy$, where $v \in X$ is a partial isometry and $y\in \mathcal{L}(X)$. Then we define the analytic function $x$ from the strip ${S=\{z|\pl 0\le Re (z)\le 1\}}$ to $X$ as follows,
\[x(z)=\si^{-z}vy^{pz}\pl \pl , \pl\pl x(1/p)=\si^{-\frac{1}{p}} vy=x \pl.\]
Note that \[\norm{x(it)}{\infty}=\norm{\si^{-it}vy^{itp}}{\infty}\le 1 \pl, \pl\norm{x(1+it)}{1,\si}=\norm{\si \si^{-1-it}vy^{p(1+it)}}{1}=\norm{vy^{p}}{1}\le\norm{vy}{p}^p\le1. \]
Therefore $\norm{x}{[X_{\infty}, X_{1,\si}]_{\frac{1}{p}}}\le \norm{\si^{\frac{1}{p}}x}{p}$. On the other hand, suppose that we have an analytic function $x: S \to X$ such that
\[\sup_{t}\{\pl\norm{x(it)}{\infty}, \norm{\si x(1+it)}{1}\} \le 1 \pl .\] Recall that $1/p'+1/p=1$. For any $\norm{a}{S_{p'}(K,H)}\le 1$, we claim
\[tr(\si^{1/p}xa)\le 1\pl.\]
Indeed, consider the analytic function
$h(z)=tr(\si^{z}x(z)a(z))$ where $a(z)=w|a|^{p'(1-z)}$.
On the boundary of the strip $S$, \[|h(it)|\le\norm{x(it)}{\infty}\norm{a(it)}{1}\le 1 \pl ,\pl\pl |h(1+it)|\le\norm{\si x(1+it)}{1}\norm{a(1+it)}{\infty}\le1 .\]
By the maximum principle, we obtain that $|h(1/p)|=|tr(\si^{\frac{1}{p}}xy)|\le  1$, which proves the claim. For noninvertible $\si$, one can repeat the argument for $\tilde{\si}=\si+\delta 1$ with $\delta>0$ and let $\delta$ go to $0$.
\end{proof}
\begin{rem}{\rm The above interpolation relation can be generalized to two-sided densities. Let $0\le \theta\le 1$. Given $\si\in \mathcal{L}(X), \rho\in \mathcal{R}(X)$, one can define $X_{p,\theta}$ as the corresponding space equipped with the norm,
\[\norm{x}{p,\theta,\si,\rho}\lel\norm{\si^{\frac{\theta}{p}}x\rho^{\frac{1-\theta}{p}}}{p}\pl.\]
These $L_p$-spaces also interpolate \cite{JX},
\[[X_{\infty}, X_{1,\theta,\si,\rho}]_{\frac{1}{p}}=X_{p,\theta,\si,\rho}\pl.\]}
\end{rem}

\end{document}